\theoremstyle{theorem}
\newtheorem{theorem}{Theorem}
\theoremstyle{theorem}
\newtheorem{lemma}{Lemma}
\theoremstyle{definition}
\newtheorem{definition}[theorem]{Definition}
\theoremstyle{definition} 
\theoremstyle{definition} 
\newcommand{\ignore}[1]{}
\begin{document}

\title{Privacy-Preserving Payment Splitting}
  \author{Saba Eskandarian\thanks{Stanford University. Email:~\texttt{saba@cs.stanford.edu}.} \and 
			Mihai Christodorescu\thanks{Visa Research.} \and
			Payman Mohassel\thanks{Facebook (work done while at Visa Research).} 
}
  \date{}
\maketitle

  \begin{abstract}
Widely used payment splitting apps allow members of a group to keep track of debts between members by sending charges for expenses paid by one member on behalf of others. While offering a great deal of convenience, these apps gain access to sensitive data on users' financial transactions. In this paper, we present a payment splitting app that hides all transaction data within a group from the service provider, provides privacy protections between users in a group, and provides integrity against malicious users or even a malicious server. 

The core protocol proceeds in a series of rounds in which users either submit real data or cover traffic, and the server blindly updates balances, informs users of charges, and computes integrity checks on user-submitted data. Our protocol requires no cryptographic operations on the server, and after a group's initial setup, the only cryptographic tool users need is AES. 

We implement the payment splitting protocol as an Android app and the accompanying server. We find that, for realistic group sizes, it requires fewer than 50 milliseconds per round of computation on a user's phone and the server requires fewer than 300 \emph{micro}seconds per round for each group, meaning that our protocol enjoys excellent performance and scalability properties.
\end{abstract}

\section{Introduction}

Payment-splitting apps solve the problem of keeping track of debts between members of a group. They provide a convenient interface and bookkeeping system for groups to keep track of individual or communal costs among their members. Common use cases involve splitting bills for meals among friends or colleagues, roommates keeping track of grocery, utility, or rent charges, and groups of travelers monitoring expenses during a trip. Groups can exist for a short term, e.g. a vacation group, or indefinitely. There exist a great number and variety of free payment splitting apps\footnote{Some popular payment splitting apps: Splitwise, Receipt Ninja, BillPin, SpotMe, Conmigo, and Settle Up.}, with multiple having over 1 million downloads and tens of thousands of positive reviews on the Google Play app store. 

Unfortunately, these apps leak a great deal of private user data to the service provider. The privacy policy for Splitwise~\cite{splitwiseprivacy}, perhaps the most well-known payment splitting app, permits collection of, ``for example, group names, expense descriptions and amounts, payments and their confirmation numbers, comments and reminders, receipt images, notes, and memos, in addition to any other information that you attach or share'' as well as ``the types of expenses you add, the features you use, the actions you take, and the time, frequency and duration of your activities.'' 

A recent survey covering person-to-person payment preferences in the US~\cite{cardtronics} discovered that three times as many people prefer cash compared to mobile apps for person-to-person payments. Among the perceived benefits of cash, privacy leads in importance by the largest margin, even when considering the convenience of mobile payments. This suggests a need for a system that combines the convenience of mobile payment splitting apps with cash-like privacy. Although generic techniques from fully homomorphic encryption~\cite{Gentry09,BGV11,GSW13} and server-aided multiparty computation~\cite{FKN94,KMR11,KMR12,HLP11} or even a system built on top of 
privacy-preserving cryptocurrencies~\cite{mw,conftrans,zerocash} could theoretically solve this problem, these solutions fall short of the need for an efficient, highly scalable, and secure solution. 

In this work, we present a payment splitting app that hides all transaction data within a group from the service provider while also providing privacy against non-involved group members and integrity against malicious users or even a malicious server. Our payment splitting protocol hides which group member pays in a transaction, which member is paid, how much is spent, when the transaction occurs, and even whether a transaction has happened at all from the server and entities external to the group. The server learns only the membership of each group and nothing more. Within a group, users not involved in a given real-world transaction will not learn which group members are indebted by that transaction. Moreover, our protocol is designed to ensure that neither a malicious user nor the server itself can alter balances, frame other users for charges, or otherwise tamper with the system's operation. Its server-side computation requires no cryptographic operations, only arithmetic on 128-bit integers, and, after an initial setup phase, the only cryptographic operations on user devices are evaluations of AES. A small-scale user survey motivates design decisions regarding our security requirements and the construction of the final system.

Our protocol operates in a setting where users can connect to a central server but cannot communicate directly with each other, a common design among existing (non-private) payment splitting apps. The server facilitates group setup, and clients then communicate through a server-assisted protocol in a series of rounds. Members of each group share a secret key. In each round, users send either a transaction or cover traffic to the server. At a high level, each user's message is a vector containing masked transaction information. We carefully design the structure and content of messages to allow the server to blindly update balances, to allow users to reject incorrect charges, to check that no user has attempted to tamper unfairly with its own or others' balances, and to minimize information learned by users not involved in a transaction. Moreover, our protocol takes advantage of the structure of the payment splitting problem to provide integrity protections without relying on zero-knowledge proof techniques.

We implement an Android app and an accompanying server that run our protocol and evaluate them on commodity hardware. We find that, for groups of up to 25 members (a realistic size, as determined by our survey), our implementation requires fewer than 50 milliseconds per round of computation on a user's phone and the server requires fewer than 300 \emph{micro}seconds per round for each group, with most transactions requiring only a few rounds to complete at most. Bandwidth requirements are also light, at under 500~bytes of communication between a client and the server in each round. 

Our work is, to our knowledge, the first to consider the problem of privacy-preserving payment splitting apps. We demonstrate that other, general-purpose solutions cannot be easily adapted to efficiently solve this problem by comparing our system to ZKLedger~\cite{zkledger}, a system designed for privacy-preserving audits of distributed ledgers which can be adapted to serve as a payment splitting system. Our protocol's computation time and bandwidth usage outperform ZKLedger by 7.3$\times$ and 6.8$\times$ respectively if it were used for a 10 member payment splitting group, the largest size for which ZKLedger reports end to end transaction times.

In summary, we make the following contributions:
\begin{itemize}
\item We introduce the notion of a privacy-preserving payment splitting scheme and provide formal definitions that model security for such a system.

\item We construct and prove the security of a practical and scalable privacy-preserving payment splitting scheme.

\item We implement and evaluate our system as an Android app and an accompanying server, finding that it requires fewer than 300$\mu$s of computation per group on the server and 50ms on the client for each round of the protocol in realistically sized groups.
\end{itemize}

\section{System Goals}\label{secmodel}

In this section, we present the goals for our system as well as the notation and security definitions required of a privacy-preserving payment splitting scheme. 

\vspace{.3em}\noindent\textbf{User Survey}. Before building our system, we conducted a short survey in order to understand how payment splitting apps are used and to guide our system design and evaluation. The survey was sent via email to approximately 250 individuals (employees in an office building belonging to a large company) of which 51 responded. All institutional requirements to administer this survey were satisfied before it was distributed. The full text of the survey appears in Appendix~\ref{survey}. 

In terms of privacy preferences, we found that only 4\% of respondents prefer to have their transactions be public, which we take as confirmation of our belief in the importance of investigating privacy-preserving payment splitting. 70\% of respondents preferred for transactions to be visible only to their participants, motivating the inclusion of debtor privacy in our security definitions. The data considered most sensitive about transactions was the identity of the parties involved. We will discuss other aspects of the survey responses as they become relevant to the design of our solution. 

\subsection{Functionality Goals}
A payment splitting scheme, as we define it, allows users to establish a group, add/remove members, send each other charges, reject unwanted charges, and settle the group balance. These operations can be categorized into group management operations, such as setting up a group and adding/removing users, and payment splitting operations that deal with the real functionality offered by the scheme. Our definitions and security arguments primarily focus on modeling payment splitting operations, as the group management operations are fairly straightforward and do not impact security. We support the following payment splitting operations.

\noindent\textbf{Request}. Any user in a group may send a request for any amount of money to any other user. 

\noindent\textbf{Reject}. Since a payment splitting app cannot know what real-world payments actually occured, users can reject charges which they dispute or were sent in error. 

Splitwise and some of its competitors immediately subtract money from a user's balance when the user is sent a request and then restore the money if the request happens to be rejected. As such, we observe that any system which offers the ability to trace who has made a given payment request automatically also implicitly allows rejections because an identical payment request can be sent in the opposite direction to cancel the first. For this reason, we focus on including a \emph{trace} functionality that reveals who initiated a request instead of implementing rejection directly. With the proper application logic running on top of the core protocol (as in our system), this approach provides the same interface to the user of a private payment splitting app as one that has a built-in rejection operation.

\noindent\textbf{Settle}. Payment splitting apps simplify paying friends back by treating users' debts as being owed to the group instead of to various individuals. While users can pay each other back directly for each charge, the service can also simplify payments by telling each user who to pay and how much to pay in order to most efficiently settle all the group's balances at once. 

One way to implement a settle operation would be for users to get a matrix indicating who should pay whom. We opt to describe this in terms of users getting a vector of balances because such a matrix can easily be derived from the vector. 

Having described the payment splitting functionality we plan to achieve, we end this section with a correctness property that our payment splitting scheme must satisfy. The definition states the intuitive notion that, as long as all parties follow the protocol, the scheme should maintain an accurate running balance of debts between group members after each operation in a way that the originator of each transaction can be traced. 

\begin{definition}[Correctness]
A payment splitting scheme is \emph{correct} if, when all users and the server act honestly,
\begin{itemize}
\item[i)] the balances revealed by a settle operation correspond to the sum of all requests made since the formation of the group when the vector of balances was filled with zeros, and \item[ii)] after each transaction, each user's own recorded balance always matches the corresponding value in the vector that would be returned by a settle operation. 
\end{itemize}
\end{definition}

\subsection{Security Goals} Our system achieves the following security properties, summarized below and described in greater depth later in this section.

\noindent\textbf{Server Privacy}. For any transaction, the server should not know which group members are involved in the transaction or how much is spent in the transaction. Details of who makes transactions when, rejected charges, etc should also be hidden.

\noindent\textbf{Debtor Privacy}. Up until the group settles (when it becomes clear who is indebted because they need to pay), no transaction will reveal which users it puts into debt. 

\noindent\textbf{User Integrity}. No user should be able to take advantage of privacy to create money for him/herself or otherwise maliciously tamper with balances.

\noindent\textbf{Server Integrity}. A malicious server cannot cause the protocol to deviate from correct functionality except by denial of service. This requirement only applies in the malicious security setting and is not included when we consider a semihonest server that follows the rules of the protocol while trying to learn user secrets. 

Note that although our system will provide privacy and integrity guarantess against both malicious users and servers, it will not protect against collusion between a malicious user and the server. Fortunately this combination of malicious actors is not a major concern in practice since the threats posed by a malicious server and a malicious user are usually orthogonal. For example, payment splitting groups typically consist of people who know each other in real life, making it infeasible for a malicious server to insert a fake user into a group so long as the group setup procedure is secure (we discuss options for setting up groups in our scheme in Section~\ref{overviewSec}). On the other hand, the owner of a payment splitting server is not typically invited into a group by the users of the service, so a malicious server operator would need to find another way to compromise integrity. One shortcoming of this approach is that it prevents a member of a group from self-hosting a payment splitting server, but self-hosting is not supported by most (non-private) payment splitting apps in use today either. Supporting security against malicious clients and servers that collude is a compelling problem for future work. 

\subsection{Security Definitions}

We now define the security requirements our system must meet. In terms of server privacy, we want our scheme to reveal to the server only group membership and hide any details of transaction parties, quantities, frequencies, etc. We formalize this with a security game where no server can distinguish between two potential transcripts of requests for a given group. Our server privacy definition also implies protection against an adversary who eavesdrops on the network or controls users in other groups in addition to the server itself.

\begin{definition}[Server Privacy Experiment]
The server privacy experiment \textsf{PRIV[$\mathcal{A}$, $\lambda$, b]} with security parameter $\lambda$ is played between an adversary $\mathcal{A}$ who plays the role of the server and a challenger $\mathcal{C}$ who is given input $b$ and plays the honest users $\mathcal{U}_1...\mathcal{U}_N$. 
\begin{enumerate}
\item Setup. Adversary $\mathcal{A}$ picks a group size $N$ and sets up a group with $\mathcal{C}$ playing the role of the users. 

\item Transactions. $\mathcal{A}$ sends $\mathcal{C}$ two transactions $t_0$ and $t_1$, both of which are either a request between group members or a settle operation. If exactly one of $t_0$ and $t_1$ is a settle operation, $\mathcal{C}$ aborts the experiment and outputs 0. Next, $\mathcal{C}$ interacts with $\mathcal{A}$ to carry out operation $t_b$. $\mathcal{A}$ can repeat this step as many times as it wishes. 

\item Output. $\mathcal{A}$ outputs a bit $b'$. 
\end{enumerate}

\textsf{PRIV[$\mathcal{A}$, $\lambda$, b]} outputs the value $b'$ returned by $\mathcal{A}$ at the end of the game. 
\end{definition}

\begin{definition}[Server Private]
A payment splitting scheme is \emph{server private} if no PPT adversary can win the server privacy game with greater than negligible advantage. That is, if the quantity $|\textsf{Pr}[\textsf{PRIV}[\mathcal{A}, \lambda, 0]=1]-\textsf{Pr}[\textsf{PRIV}[\mathcal{A}, \lambda, 1]=1]\leq\textsf{negl}(\lambda)|$ for any PPT $\mathcal{A}$.
\end{definition} 

In addition to complete privacy against the server, we require a notion of privacy against other users in a group as well, which we call \emph{debtor privacy}. Debtor privacy protects the privacy of users who become indebted to other group members by hiding the target of any request. Informally, we say that a payment splitting scheme is \emph{debtor private} if any coalition of compromised users cannot determine the identity of the requestee in a given transaction before the group settles. Note that since the adversary in this game corrupts members of the group, it has access to any group-wide secrets used to hide information from the server. 

\begin{definition}[Debtor Privacy Experiment]
The debtor privacy experiment \textsf{DEBTPRIV[$\mathcal{A}$, $\lambda$, b]} with security parameter $\lambda$ is played between an adversary $\mathcal{A}$ who plays the role of compromised users, and a challenger $\mathcal{C}$ who plays the server and uncompromised users and is given input $b$. 
\begin{enumerate}
\item Setup. Adversary $\mathcal{A}$ picks a group size $N$ and a set of indexes $M\subset [N]$ of users to corrupt, $|M|=n, n < N$. Adversary $\mathcal{A}$ and challenger $\mathcal{C}$ set up a group with $\mathcal{A}$ playing the role of users $\mathcal{U}_i$ for $i\in M'$ where $M'\subseteq M$ is a subset of the adversary's choosing. Challenger $\mathcal{C}$ plays the role of the server and all users $\mathcal{U}_i, i\notin M'$. After the group is set up, $\mathcal{C}$ sends $\mathcal{A}$ all secrets and any other state it holds for users $\mathcal{U}_i$, $i\in M$.

\item Transactions. $\mathcal{A}$ sends $\mathcal{C}$ two transactions $t_0$ and $t_1$, both of which are either a request between group members or a settle operation. If any of the following conditions are met, $\mathcal{C}$ aborts the experiment and outputs~0. 

\begin{enumerate}
\item Exactly one of $t_0$ and $t_1$ is a settle operation. 

\item A user $\mathcal{U}_i, i\in M$ is either making a charge or being charged and $t_0\neq t_1$. 

\item A user $\mathcal{U}_i$ making a request differs between $t_0$ and $t_1$.

\item $t_0$ and $t_1$ are settle operations, but the balances returned differ for any user. 
\end{enumerate}

Next, $\mathcal{C}$ interacts with $\mathcal{A}$ to carry out operation $t_b$. $\mathcal{A}$ can repeat this step as many times as it wishes. 

\item Output. $\mathcal{A}$ outputs a bit $b'$. 

\end{enumerate}

\textsf{DEBTPRIV[$\mathcal{A}$, $\lambda$, b]} outputs the value $b'$ returned by $\mathcal{A}$ at the end of the game. 
\end{definition}

\begin{definition}[Debtor Private]
A payment splitting scheme is \emph{debtor private} if no PPT adversary can win the debtor privacy game with greater than negligible advantage. That is, if the quantity $|\textsf{Pr}[\textsf{DEBTPRIV}[\mathcal{A}, \lambda, 0]=1]-\textsf{Pr}[\textsf{DEBTPRIV}[\mathcal{A}, \lambda, 1]=1]\leq\textsf{negl}(\lambda)|$ for any PPT $\mathcal{A}$.
\end{definition} 

Our definition of debtor privacy only applies in the context of requests and not settlements. Since money must change hands outside of the system when a group settles, it is necessary that some information about who is in debt be revealed at that point. 

One could imagine even stronger notions of privacy against other users, but we leave the task of building stronger security notions that precisely quantify leakage in settlement and also provide privacy for the requester to future work. We feel that protecting debtors provides a good balance between security and functionality where critical privacy needs are addressed while the resulting system can still be built from the most lightweight cryptographic tools.

We also require groups to have integrity, which we separate into \emph{user integrity} and \emph{server integrity}. User integrity requires three properties. First, we want to ensure that users cannot silently corrupt the balances kept by the server. We capture this property by observing that the balances kept by the server are valid so long as they all sum to zero, meaning that if everyone who is in debt pays, then everyone who is owed money gets all their money back. 
Second, we need to ensure that even if some users are malicious, they cannot ``confuse'' other users by causing them to have a locally stored balance that differs from their reported balance if the group were to settle. 
Third, we want to prevent malicious charges between users, but a malicious charge (i.e., a charge disputed by a user, though processed correctly by the system) can simply be rejected or charged back by the user who does not want to pay (just like in payment apps widely in use today). We must ensure, however, that an attacker cannot get away with framing a different group member as the originator of an unwanted charge to avoid being charged back. We include both properties in the definition of user integrity below. Section~\ref{extensions} describes how we can achieve stronger notions of user integrity, e.g., where we identify which user attempted to corrupt the sum of user balances or allow a framed user to not only detect but also prove that she has been framed. 

\begin{definition}[User Integrity]
Consider a PPT adversary $\mathcal{A}$ who corrupts up to $N-1$ users $\mathcal{U}_i$ in a group of size $N$. We say that a payment splitting scheme has \emph{user integrity} if the following properties are satisfied except with negligible probability in the security parameter~$\lambda$:

\begin{itemize}
\item[--] After each request, either the server detects it as an invalid transaction (and can roll it back) or, if the members of the group were to settle at that time, the resulting vector of user balances would sum to zero.

\item[--] After each request, either the server detects it as an invalid transaction (and can roll it back) or, if the members of the group were to settle at that time, each honest user's entry in the vector returned by settling would match its locally stored balance. 

\item[--] Any attempt to tamper with the output of the trace functionality such that it falsely points to an honest user who was not involved in a transaction can be detected by the framed honest user.

\end{itemize}
\end{definition}

Finally, server integrity requires that a malicious server cannot tamper with user balances without being detected by the users. 

\begin{definition}[Server Integrity]
Consider a (potentially malicious) PPT server $\mathcal{S}^*$ operating a group of size $N$. We say that a payment splitting scheme has \emph{server integrity} if, after each transaction, if members of the group were to settle at that point, they would either detect that $\mathcal{S}^*$ has acted maliciously or output the same vector of balances as they would when interacting with an honest server $\mathcal{S}$, except with negligible probability in the security parameter~$\lambda$.
\end{definition}

\section{Architecture Overview}\label{arch}
\ignore{
\begin{figure}
\centering
\includegraphics[width=.9\linewidth]{arch.png}
\caption{In our architecture, the server keeps track of user balances without ever seeing the balances or knowing who is making charges. The figure depicts a transaction where user 1 indicates that user 3 owes him money.}
\label{architecture}
\end{figure}
}

Our architecture consists of a mobile app and a server operating in the setting where devices running the mobile app have a secure network connection with the server but no connection with each other, as is commonly the case in payment apps used today. In addition to being widely used in practice, this architecture enables convenient group management and enjoys faster latency than decentralized systems.
We leave the investigation of alternative settings, e.g. peer-to-peer distributed networks between phones, as an interesting problem for future work and briefly discuss some possibilities in Section~\ref{discussion}.

Users of our app organize themselves into groups, and users within a group can send charges to each other to request money. Different groups operate independently of each other, and one server can support many groups at once. We describe our solution in the context of a single group, but the protocol can be repeated separately in parallel for as many groups as a server can support.

Similar to many other privacy-preserving protocols (for example,~\cite{anonrep,privateride,riposte,stadium,pung}), our core protocol proceeds in a series of rounds. 
To hide which users are and are not participating in transactions, users in each round send the server either a message representing a transaction or cover traffic to hide real transactions.
With all users online, this provides complete anonymity within the group for the user sending a real transaction (provided the protocol used provides server privacy). We discuss the resilience of our solution's anonymity to users going offline in Section~\ref{extensions}. 

\vspace{.3em}\noindent\textbf{Inadequacy of Trivial Solutions}. The rest of this section discusses a number of solutions to the problem of privacy-preserving payment splitting that use powerful generic ideas from cryptography or attempt to use simple tools na\"ively. We sketch each approach and then explain why it is inadequate either in terms of performance or security. 

The most generic cryptographic tools available for this problem come from fully homomorphic encryption (FHE)~\cite{Gentry09,BGV11,GSW13} or server-aided multiparty computation~\cite{FKN94,KMR11,KMR12,HLP11}. These techniques allow users to upload ciphertexts to a server who can then compute arbitrary functions on the encrypted data and send users back the result. Unfortunately, techniques from fully homomorphic encryption remain too slow for use in all but the most limited settings. Although adding subtracting from a user's balance only requires support for addition and subtraction, which can be achieved with significantly more lightweight cryptographic tools, one of the core technical challenges of our work lies in finding ways to allow a server to blindly \emph{route} payments between users without the full power of general FHE. 
Another possibility is to use multiparty computation, but multiparty computation techniques in the server-aided setting rely on garbled circuits~\cite{Yao82b,Yao86}, which operate on boolean circuits and therefore incur an additional evaluation of AES \emph{for each gate} in the boolean circuit representing the function to be calculated. We further discuss related work in generic techniques applicable to payment splitting in Section~\ref{related}.

Instead of using powerful cryptographic tools, one may also try to achieve our security goals through na\"ive use of basic cryptographic tools such as encryption or signatures. Consider the trivial scheme where users simply broadcast their encrypted transactions to all other users in a group, using the server to route each message to all other group members. In order to ensure user integrity, users could then gossip the messages they receive by re-sending them to all other group members. This results in a scheme with $O(\lambda N^2)$ communication between the client and servers and can be further reduced to $O(\lambda N)$ communication by having users only gossip signatures over the messages they receive instead of the messages themselves. 

The scheme above can provide some of the properties we want from a privacy-preserving payment splitting scheme, but not all of them. In particular, all transactions in that approach are visible to every member of the group, so it does not achieve debtor privacy. In general, it is easy to provide one or the other of our privacy requirements -- debtor privacy without server privacy could easily be achieved in an existing payment splitting app modified to hide some transaction information from users -- but combining them in the same scheme requires more work.

\section{Core Functionality}\label{construction}

This section presents the core functionality for privacy-preserving payment splitting. We begin with a simplified version of our protocol that provides neither efficiency nor security and add in features to improve security and performance one at a time. We will present two variants of our system: one that is only secure against a semihonest server who adheres to the rules of the protocol while trying to learn user secrets and another that is secure against a fully malicious server who can arbitrarily deviate from the protocol. Proofs of security appear in Section~\ref{security}, and we discuss a number of extensions to the core functionality in Section~\ref{extensions}. 

Our constructions derive their security from the assumption that there exists a pseudorandom function (PRF), e.g. that AES is a secure PRF. Informally, a PRF has the property that an adversary cannot distinguish between a random string and the output of a PRF on a given input. The formal definition of a PRF follows.

\begin{definition}[Pseudorandom Functions~\cite{GGM84}] Let $F: \{0, 1\}^n\times\{0,1\}^n\rightarrow\{0, 1\}^n$ be an efficiently computable, length-preserving keyed function. We say that $F$ is a pseudorandom function (PRF) if for all probabilistic polynomial time distinguishers $D$,
\begin{align*}|\textsf{Pr}[D^{F_k}(1^n)=1]-\textsf{Pr}[D^{f_n}(1^n)=1]|\end{align*}
is negligible where $k\gets\{0,1\}^n$ is chosen uniformly at random and $f_n$ is chosen uniformly at random from the set of functions mapping $n$-bit strings to $n$-bit strings. 
\end{definition}

\vspace{.3em}\noindent\textbf{Setup and sharing keys}. A group begins when its $N$ members agree on a shared key. The details of how group members share a key are covered by prior work and constitute an orthogonal problem. One possibility is for one group member to pick the key and send it to the others, encrypting it with a public key for each other member. The server can keep a bank of users' public keys, as is common in encrypted messaging apps, or use a system such as CONIKS\cite{coniks} in the malicious server setting. 

\subsection{The Basic Protocol}\label{overviewSec}


\begin{figure}
\centering
\includegraphics[width=.9\linewidth]{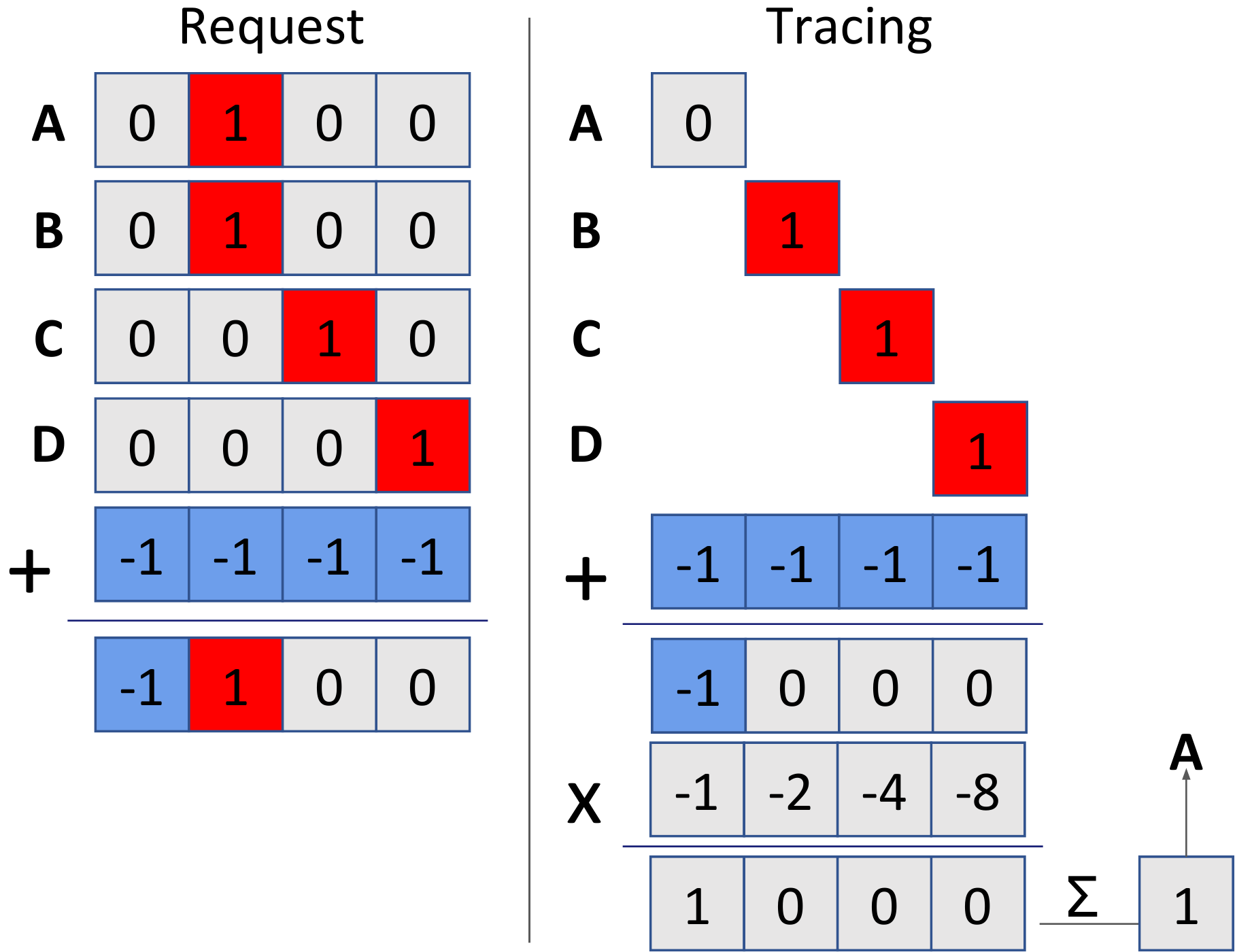}
\caption{Our basic payment splitting protocol. The full protocol uses the output of a PRF whose key is known to all group members to mask and authenticate each user-submitted value. The server also sums all user-submitted values and polls users to ensure the sum matches the group size. Left: an example transaction where user A charges user B one unit of currency. Each user submits a vector with a single nonzero entry, and the server subtracts one from each user's balance, leaving ``-1'' in user A's entry and ``+1'' in user B's entry. Right: the tracing procedure for the same transaction. The server subtracts 1 from what users put in their own entries in the vectors they submit. This results in zero everywhere except for user A, who made a request. The resulting differences are multiplied by increasing powers of 2 and summed to uniquely identify which user(s) made a request in that round. In case of collisions, all users making requests are identified by the sum -- they roll back and resend their requests one by one. }
\label{diagram}
\end{figure}

Over 90\% of our survey respondents reported that transactions in their payment splitting groups are for values under \$100, meaning that large transactions rarely occur. We take advantage of this by designing our protocol to restrict the amount of money that can be transferred in a single transaction and then using the structure resulting from this restriction to enforce integrity requirements without relying on zero knowledge. In the simplified protocol presented here, only one unit of currency changes hands in each round, thus exchanging \$$X$ requires $X$ rounds. Later, we will show how to allow users to exchange \$$X$ using only $\log{X}$ rounds and how to run several rounds in parallel without increasing bandwidth costs. Since payment splitting apps are most frequently used for small quantities of money, these restrictions do not pose a significant performance obstacle for day to day transactions, and an occasional larger transaction can simply be split across multiple rounds. In our final design, a transaction of $\$1,000.00$ requires only 3 rounds to complete. Although we describe our protocol in terms of dollars, we do not require users to round transactions to the nearest dollar, and, in particular, we allow transactions at the 1 cent granularity. 

To introduce the structure of our approach, we will begin by describing our system such that it is missing most privacy and integrity properties. We will then add privacy and integrity one at a time. This basic version of our protocol is demonstrated in Figure~\ref{diagram}.

\noindent\textbf{Setup}. The group begins when its $N$ members agree on a secret key. The server creates a vector of $N$ zeros to represent user balances, with users being assigned numbered $1$ through $N$. 

\noindent\textbf{Group membership operations}. An existing member adds a new member to the group and sends it the key. The server adds another 0 to the vector of balances and informs the other group members. To leave a group, a user who has paid all his debts and has a balance of 0 notifies the server (who informs other users), and has his entry removed from the vector of balances. A former member who has left a group still holds the group key, so if remaining group members do not trust former members with the key, they must form a new group using a different key. 

\noindent\textbf{Requests}. Payment proceeds in a series of rounds. In each round, a user can request one unit of currency from another. Each user sends the server a vector of size $N$ where all entries are zeros except for a single $1$ in the cell corresponding to the user who will be charged. Users wishing not to make a transaction in a given round put their $1$ in the cell for their own account.

The server sums the vectors it receives from each user and adds that to the balance vector. Then it subtracts 1 from the balance of each user. This results in each user either breaking even or transferring one unit to another user. The values in users' balances in this scheme represent the amount of debt they owe to the group, so putting a 1 in the balance of another user means ``giving'' them one unit of debt. Note that since users only receive updates to their own balance in a given round, the scheme naturally satisfies the notion of debtor privacy. 

\noindent\textbf{Tracing}. In order to support tracing, we first assign each member of the group a power of 2, so group member $i$ is assigned the value $2^i$. Each user then has an additional value associated with them by the server, formed by subtracting 1 from the $i$th element of their request vector. Thus a user who is not making any charges has a 0 and a user who is requesting a payment has a -1. Deriving users' values from the vectors they already sent instead of sending them separately saves one message of bandwidth from each group member to the server. The server multiplies each value with the power of 2 assigned to that group member and sends the sum of the resulting values to each user. 

The users learn from this sum exactly which parties are charging in this round since the resulting value will be unique for each subset of powers of 2. When only one party is charging, a user who disagrees with a charge can reject the charge by initiating a charge of the same value in the opposite direction. Integrity against framing can easily be built on this mechanism by having a user who has been framed report that she has been framed.

In the event that multiple users are charging in the same round, a chargee cannot know which of the chargers originated a particular transaction, so the next few rounds are used to resolve the collision. In the first round, all the transactions of the previous round are ``rolled back'' by having the chargers submit a $+2$ in their own indexes and the chargees a $-1$. In subsequent rounds, each of the charges are made again with the lowest numbered charger making its request in the first round, the second lowest numbered charger making its request in the second round, and so on. Other users wanting to make a charge wait until the resolution process completes to make their charges. This technique can be used to resolve nested collisions too. Since by the time a request completes, the value sent to identify the charger will definitely be a power of 2, there can only be one user who has been charged and the log of that power of 2 will be the identity of the charger.

At first glance, it may appear inconvenient that our system processes transactions within a group one at a time and requires rollbacks and serialization for colliding transactions. However, we found that over 90\% of our survey respondents' groups had no more than 3 transactions in a typical day, meaning that a rollback like the one described here will not need to be used frequently. Given that our protocol supports nested collisions and can operate transparently to the end-user, we see this mechanism as a favorable bandwidth-saving mechanism over a scheme where the server sends much more information to users in each round to avoid rollbacks. 

Note that our user tracing functionality (and user integrity definition) only ensures that the \emph{requestor} in a transaction has not been modified to frame an uninvolved party. This is critical in order for users being charged to reject false charges. 
It is, however, possible for a malicious user to change who \emph{is being charged}. For example, a malicious user who knows that a particular user is going to be charged in a given round can put a 1 in his own index, a -1 in the index of the user who is supposed to be charged, and a 1 in the index of the user to which he wants to attempt to redirect the charge. 
Fortunately, as long as users do not accept charges when they do not owe debts in real life, the fact that the requestor of a charge cannot be tampered with ensures that this kind of framing has limited impact. If the user interface appropriately notifies users about who is charging them (as is common in payment splitting apps used today), they can reject unexpected charges due to this kind of attack just as they would an accidental charge.

\noindent\textbf{Settling}. Users settle by downloading the balance vector from the server and handling payments via another channel.

\subsection{Adding Privacy and Integrity}
We now show how to modify the scheme described above to achieve privacy and integrity.

\vspace{.3em}\noindent\textbf{Privacy}. We provide privacy by masking all the values sent by clients with the outputs of a PRF $f$ evaluated with a key \textsf{sk} chosen during group setup. Let $v_{i,j}$ represent the value in the $j^\textsf{th}$ position of a vector sent by user $i$. For every value $v_{i,j}$ in round $m$, there will be a PRF output $r_{i,j}=f(\textsf{sk}, m||i||j)$ and the value sent to the server instead of $v_{i,j}$ will be $v_{i,j}+r_{i,j}$. 

In order to maintain correctness, each user will compute all values of $r_{i,j}$ in each round and keep a running sum of all the $r$s that have been added into their own balances. They can do this because the PRF key is shared by all members of the group. When receiving balances from the server, users will subtract off the sum of all the $r$ values that have been added to the balance to retrieve their actual balance. When receiving the sum used for tracing, users must subtract off the relevant $r$ values multiplied by the corresponding powers of 2 with which they will have been multiplied by the server. 

\vspace{.3em}\noindent\textbf{Integrity}. After receiving inputs from the clients, the server takes the sum of the values it wishes to add to the balance vector and polls each user to make sure the value is equal to the number of members in the group plus the sum of all $r$ values used as masks in that round. This check protects against silent growth in the total balance of the group. The checks can be done asynchronously with other operations and rounds can be rolled back after-the-fact if an issue is found. Note that this integrity mechanism will work even if the group members are in the middle of rolling back a transaction (as described in the previous subsection) because even though individual users will not send vectors that contain exactly one non-zero entry, the sum of all users' vectors will be the same as in any other round. This mechanism provides the first of our two requirements for integrity, and protection against framing is provided implicitly by our technique for tracing because a user's app can automatically detect if it is traced as the originator of a charge in a round where it did not actually make a request. 

\subsection{Malicious Security}

We can make our construction secure against malicious servers with no changes to the server, no increase in per-round bandwidth, minimal changes on the client, and a one-round increase in the case of colliding transactions. In place of masking each value sent to the server with a pseudorandom $r_{i,j}$, we send the value $sv_{i,j}+r_{i,j}$ for a fixed $s$ also generated at group creation time from the PRF. This value is very similar to the homomorphic MAC tag of Agrawal and Boneh~\cite{AB09} but we use it to achieve different security properties.

We must verify that our scheme still works under this change. The integrity check needs to check that the sum sent from the server equals $s\cdot N$ plus the sum of $r$s instead of just $N$ but otherwise works unmodified. We must also add an integrity check to the setup process because the server cannot generate a vector of 1s on its own and needs to be sent such a vector from one of the users. Other users must check that the vector sent to the server actually consists of all 1s. 

Next, users, upon receiving their new balance from the server, know that the new value will be at most one away from the old value. As such, they can check each of the three possible values by multiplying the possible balances by $s$ and seeing which matches the sent value after removing the $r$ values. If the new value does not match one of the three possibilities, there has either been a collision in payment requests or the server has been caught behaving maliciously. 

The tracing mechanism will require a slightly larger change. First, users do a linear scan of the $N$ possible cases where only one member of the group is making a request by checking if the value received from the server is equal to $-s\cdot 2^i$ for $i\in[N]$ after removing the $r$ values. If none of these match, then there has been a collision, but we are no longer able to determine which users collided. We solve this by using the next round of the protocol to re-send values from this round but without the $s$ multiplied in -- just as we would have in the plain PRF-based construction. After the server sends its responses, the tags from the previous round are used to verify that it did not modify the values and then the identities of the chargers can be checked just as before. The roll-back round must increase the amount by which balances are rolled back to account for the repeat of the colliding transactions being added by the server into balances.

Finally, this construction requires some extra work to settle the group because users will not know what value to expect for each other user's balance. This can be solved by having each user upload a masked value representing their balance and then other users can check to make sure that the value matches the one sent by the server. Since the values sent by a user and the server must match, allowing the users to upload their masked values does not allow users to lie about their balances. 

\subsection{Larger Transactions} 

Sending one unit of currency at a time leads to too many rounds to transact larger amounts. We can modify the round structure so each round is accompanied by a value multiplier. In this version of our scheme, each round has a predetermined value, and all transactions in that round are of that value instead of just 1. The schedule of round values can be fixed at some reasonable configuration (e.g. the first several powers of 2). It is important that the multiplier be applied on the server side (by multiplying each received vector) or else this would open the scheme to attacks on user integrity and break the correctness of our approach for rejecting charges. 

In order to further speed up transactions, several rounds corresponding to different values can take place in parallel. This can be done with little to no increase in bandwidth by using \emph{transaction packing}, where we take advantage of the fact that we instantiate our PRF with AES, which has a 128-bit output. If user balances are unlikely to exceed some large value, say $2^{21}$, we can split each masked value sent to the server into 6 separate ``slots'' where users can put transactions, treating each message as 6 separate transactions, each using 21-bit messages. 
This will result in the server keeping 6 separate 21-bit ``sub-balances'' for each user. The user's total balance is their sum. Summing the sub-balances occurs transparently to the human user whose interaction with the app is unaffected by the optimization.

\subsection{Full Protocol}

Here we formalize the malicious-secure PRF-based construction described above. The semihonest construction is a similar but simpler version of the same protocol and appears in Appendix~\ref{semihonestconstruction}. The construction assumes that users share a key \textsf{sk} as described above and omits details of the key-sharing mechanism. To focus on the core protocol, the construction is written such that each round has value 1 and does not use the transaction-packing optimization described above, but these can easily be added. We say that a party outputs $\bot$ to indicate that an integrity violation has been detected that must be handled out of band, e.g. by users moving to a more trustworthy server or kicking someone out of the group. 

Our fully malicious secure payment splitting scheme $\mathcal{P}_m$ with security parameter $\lambda$ and group size $N$ uses a PRF $f$ with range $\{0,1\}^\lambda$. After setup, the scheme proceeds in rounds, and messages are sent from each client at a rate of one per round. Let $m$ at any time denote the round number at which the current operation began, let $s=f(\textsf{sk},0)$, and let $r_{m,i,j}$ denote $f(\textsf{sk}, \textsf{m||i||j)}$ except $r_{m,i,j}=0$ for values of $m$ smaller than the round in which user $\mathcal{U}_i$ joined the group or after it left. 

\noindent\textbf{Setup}. The server stores a vector \textbf{b} of length $N$ constructed from copies of $0$. User $\mathcal{U}_1$ sends the server a value $a$, which represents a masked version of 1. Let $\textbf{a}$ represent a vector of length $N$ constructed from $N$ copies of $a$. The server sends $a$ to each user, and the users reject if $a\neq s+f(\textsf{sk}, 1)$. The users then store values $b_i=0, b_i'=0$, and the key $\textsf{sk}$. 

\noindent\textbf{Request}. User $i$ requests a unit of currency from user $j$ according to the following steps. 

\begin{enumerate}

\item \emph{Clients prepare vectors}. In the next round $m$, user $\mathcal{U}_i$ creates a vector $\textbf{v}_i$ where $\textbf{v}_{i,j}=s+r_{m,i,j}$ and $\textbf{v}_{i,k}=0+r_{m,i,k} \forall k\neq j$. All other users $\mathcal{U}_k$ create vectors $\textbf{v}_k$ where $\textbf{v}_{k,k}=s+r_{m,k,k}$ and $\textbf{v}_{k,k'}=0+r_{m,k,k'} \forall k'\neq k$. Each user sends its vector $\textbf{v}_i$ or $\textbf{v}_k$ to the server.

\item \emph{Server processes vectors}. Upon receiving the messages from clients for the round, the server takes the sum $\textbf{v}=\Sigma_{i=1}^N\textbf{v}_i$ and also sums the values in \textbf{v} to get $v'$. The server then sets $\textbf{b}=\textbf{b}+\textbf{v}-\textbf{a}$ and computes the value $c=\Sigma_{i=1}^N (\textbf{v}_{i,i}-a)\cdot 2^i$. The server sends the tuple $(v', c, b_l)$ to user $\mathcal{U}_l$, $l\in [N]$. 

\item \emph{Clients check integrity, update balances}. Each user receives the values $(v'^*, c^*, b_l^*)$ from the server. Then there are a number of cases:

\begin{enumerate}
\item \emph{Integrity failure}. If $v'^*\neq sN+\Sigma_{i=0}^N\Sigma_{j=0}^Nr_{m,i,j}$, the user sends an error message to the server who sets $\textbf{b}=\textbf{b}-\textbf{v}+\textbf{a}$ and outputs~$\bot$. 

\item \emph{Balance update or framing failure}. Otherwise, if $c^*=-2^i\cdot s-\Sigma_{j=1}^N((r_{m,j,j}+f(\textsf{sk}, 1))\cdot2^j)$ for some $i\in[N]$, the user checks whether $b_l^*=b_l+xs+\Sigma_{j=1}^Nr_{m,j,l} - f(\textsf{sk}, 1)$ for $x\in\{-1,0,1\}$. If so, it sets $b_l=b_l^*$ and $b'_l=b'_l+x$ for the appropriate $x$. If $v_{i,i}=1$ (if user $i$ did not make a request), user $i$ sends an error message to the server who in turn outputs $\bot$. 

\item \emph{Request collision}. If neither of the above cases apply, then more than one request collided in the same round $m$ (or the server misbehaved). The next two rounds are used to resolve the potential collision. 

\medskip

\textit{Round $m+1$}. In the next round (m+1), the users send the same vectors they sent in round $m$ but with updated values for $r$ and \emph{without} multiplying anything by $s$. Denote the values sent from the server (which behaves the same as above) in round $m+1$ as $(v'^{**} c^{**}, b_l^{**})$. Upon receiving their responses from the server, users check that the values received from the server in this round correspond to the same values received in the previous round (modulo differences due to lack of $s$ and the new values for $r$). 
If any of these checks fail, users output $\bot$.

\medskip

\textit{Round $m+2$}. In the next round (m+2), each party $\mathcal{U}_i, i\in[N]$ submits a vector $\textbf{v}_i$ such that $\textbf{v}_{i,i}=b_i^{**}-b_i-r_{m,i,i}-r_{m+1,i,i}+r_{m+2,i,i}$ and $\textbf{v}_{i,k}=0+r_{m+2,i,k} \forall k \neq i$ and the server behaves as above. Then the various requests that collided in this round are repeated, each in a subsequent round, in order from smallest to largest value of $i_j$ as the requester. 
\end{enumerate}
\end{enumerate}

\noindent\textbf{Trace}. User $j$ checks if anyone has charged her in round $m^*$ as follows. The user checks whether $b_j^*=b_j+xs+\Sigma_{l=1}^Nr_{m^*,l,j}-f(\textsf{sk}, 1)$ for $x\in\{-1,0,1\}$ and, if so, knows she has been charged $x$ units of currency. If no value of $x$ matches, she outputs $\bot$. If her balance has shrunk as a result of the transaction, she looks at the value of $c^*$ in that round and sets $i^*\in[N]$ to be the value for which $c^*=-2^{i^*}\cdot s-\Sigma_{l=1}^N((r_{m^*,l,l}+f(\textsf{sk}, 1))\cdot2^l)$. User $i^*$ is the number of the user who made the charge.

\noindent\textbf{Settle}. The server outputs the vector \textbf{b} to the users. Users respond by sending a fresh masking of their stored values $b'_i$ to the server, who forwards the value to all other users, which in turn unmask and retrieve the value. Users then check that for each $i\in[N]$, $b'_i\cdot s=\textbf{b}_i-m\cdot f(\textsf{sk}, 1)-\Sigma_{m'=1}^m\Sigma_{j=1}^Nr_{m',i,j}$, and reject if any check fails. The output vector $b$ is formed by concatenating the $b_i'$s.

\section{Complexity \& Security}\label{security}

We state the protocol's complexity in terms of a single round, but since our scheme limits the size of transactions, $O(\log X)$ rounds may be needed to send \$$X$ in the general case, resulting in a multiplicative $O(\log X)$ overhead. However, our transaction packing technique can run several rounds in parallel without increasing bandwidth or requiring additional AES evaluations. 

The bandwidth per round for our scheme is $O(N)$ ciphertexts from each user to the server and then $3$ ciphertexts from the server back to each client, each of size $\lambda$. In practice, the ciphertexts in the PRF-based schemes can just be $\lambda=128$ bits long.

Our construction proceeds in rounds such that there is one transaction per round as long as there is no collision between users trying to make charges at the same time. In the case of a collision, the semihonest scheme loses 2 rounds (detect collision, roll back transactions) whereas the malicious secure scheme loses 3 rounds (retrieve non-tag version of values, detect collision, roll back transactions). 

Settling requires one message of size $O(\lambda N)$ from the server to each user in the semihonest case, and in the malicious server case this is preceded by messages of size $O(\lambda)$ from each user to the server. 

\vspace{.3em}\noindent\textbf{Server}.
The server does the same process in each round in both the semihonest and malicious constructions and runs in time $O(\lambda N^2)$. The settle operation requires $O(\lambda N)$ work on the server to send the stored balances, and server storage is $O(\lambda N)$ to hold balances.

\vspace{.3em}\noindent\textbf{Client}.
Our solution requires $O(\lambda N^2)$ time to generate the PRF outputs and take the necessary sums. Achieving malicious security adds lower order terms for various checks but still requires $O(\lambda N^2)$ time in a normal round, with the potential cost of an extra round in case of a collision (see above). Settling can be done in $O(\lambda N)$ time by the client if the large sum that needs to be taken is built incrementally and saved during each round. Although $O(\lambda N)$ computation for each round would be more desirable, we find in our evaluation that performance for realistic group sizes remains quite fast. 

\vspace{.3em}\noindent\textbf{Security}. We now state our security theorems for both the semihonest and fully malicious constructions. We defer proofs to Appendix~\ref{proofsAppendix}, which proves that our scheme satisfies the properties of a secure payment splitting scheme as described in Section~\ref{secmodel}. 

\begin{theorem}\label{semihonest}
Assuming $f$ is a secure PRF, the semihonest secure payment splitting scheme $\mathcal{P}_p$ has correctness, server privacy, debtor privacy, and user integrity. 
\end{theorem}

\begin{theorem}\label{malicious}
Assuming $f$ is a secure PRF, the fully malicious secure payment splitting scheme $\mathcal{P}_m$ has correctness, server privacy, debtor privacy, user integrity, and server integrity (against a malicious server). 
\end{theorem}

\section{Extensions}\label{extensions}
This section briefly describes a number of extensions to the core protocol that may be useful in practice. 


\vspace{.3em}\noindent\textbf{Identifying misbehaving users}. Our user integrity checks, as described thus far, allow users to detect whether a user has misbehaved, but we would also like to be able to determine \emph{which} users have misbehaved in order to punish them or prevent membership in future groups. We can easily accomplish this by having the server send each user all the messages it received in the previous round, so users can check to see whose input was malformed. Unfortunately, this approach requires the server to send each user $N^2$ ciphertexts and, more importantly, compromises debtor privacy by revealing who was charged in that round. It is, however, possible to identify misbehaving users without breaking privacy. User integrity requires that the sum of all values in all users' vectors is $1$. Observe that in order to tell whether a user violated integrity, other users only need to learn whether the user submitted a vector whose entries do not sum to $1$. As such, the server only needs to distribute the sum of each user's vector, and any user whose vector does not sum to $1$ must be misbehaving (except in the case of a transaction that rolls back a collision, but all users will be aware that this is happening). This does not compromise debtor privacy because the sum for an honest user will always be $1$ regardless of whether or not that user is involved in a transaction. Moreover, the server now only needs to send each user $N$ ciphertexts instead of $N^2$.

\vspace{.3em}\noindent\textbf{Handling framing}. User integrity also requires that a user who has not initiated a charge can detect that she is being framed. An additional practical concern is for the victim of framing to prove his or her innocence to other users and to identify the misbehaving user who did the framing. Our basic scheme does not offer a mechanism for other users to verify a claim that someone has been framed. We can, however, add such a mechanism without much work. A user can prove innocence when framed by asking the server to send every user the single entry in her vector corresponding to the index of the user she has purportedly charged. If that entry is a zero (as it will always be if she did not really make a charge), then she has clearly been framed. We stress that it is not the human user herself who detects and proves that she has been framed, but the user's app, which sees it has been traced as the requester in a charge yet knows that the human user has not charged anyone in that round. The problem of detecting who has done the framing without weakening debtor privacy appears more difficult, and we leave this problem for future work. 




\vspace{.3em}\noindent\textbf{Handling users going offline}. Our schemes can be modified to handle users going offline. Intuitively, the server supplies a vector that represents making no charges on behalf of any user who does not send anything in a round. The server adds in a plaintext vector for any missing users and notifies others about who did not send a message in a given round so that they know whose $r$ values to omit from the various sums. 

The same approach works in the malicious security solution, but here we need to make a tradeoff. Allowing the server to be resilient to offline users means giving it the power to exclude any users' transactions by saying that they were not present for that round. Of course, the server does not know what a user is doing in any round, meaning this kind of attack amounts to a denial of service possibility (and the server could always deny service more directly), but our security definition would need to be modified to explicitly allow this kind of omission on the part of the server. We do note that this denial of service attack would be \emph{undetectable} by other users, a consideration which should be taken into account before deciding whether or not to enable this feature. 

Our security guarantees degrade gracefully in the absence of some users, with a transaction's anonymity set always being equal to the number of online users and user integrity holding so long as one honest user is online. The definitions in Section~\ref{secmodel} all describe a setting where users are always online, providing an anonymity set of size $N$ for each transaction and enforcing user integrity against $N-1$ malicious users. When the number of online users is reduced to $N'<N$, our scheme provides an anonymity set of size $N'$ and user integrity against $N'-1$ users. Even if no honest users are online when a malicious transaction is made, they can still detect a malicious transaction when they come back online later and the server sends them messages they missed. 


\vspace{.3em}\noindent\textbf{Improving usability for tracing and charge requests}. A user making a charge can indicate the total amount they wish to charge (split across several rounds) by sending a second encrypted value containing the amount whereas other users upload an encryption of zero. The server sums the encrypted values and sends the result to each client. This way clients know how much they will be charged at once and can give the app permission to accept charges for the appropriate number of rounds. This would enable a UI not so different from that used in payment splitting apps today while the app handles the details of the underlying protocol. At the same time, this does not introduce new security concerns because additional charges beyond the amount claimed will register as new transactions, and users' real balances are not affected by this bookkeeping shortcut. 

\vspace{.3em}\noindent\textbf{Payment splitting with collateral}. Payment-splitting groups sometimes encounter a problem where one member incurs debt to others and repeatedly fails to pay. One solution to this problem involves users putting up money as collateral when joining a group in order to insure their debts with a deposit should they prove untrustworthy. We can make our scheme compatible with this remedy as well. The core idea is simple. At regular billing intervals, say monthly, users provide the service provider with an additively homomorphic Pedersen commitment~\cite{Ped91} to their current balance and a proof that the commitment is to a value less than their deposit. If they cannot produce such a proof, they must deposit more money until they reach a point where they can. 

The challenge lies in producing a commitment to a value that provably corresponds to a user's balance. This can be accomplished by soliciting the assistance of other users, each of which can submit to the server a commitment to the masking value hiding the target user's balance. Since all users share a PRF key, they can use the PRF to generate randomness for the commitment, resulting in every member sending the same commitment to the same value. Thus, so long as one user is honest, the server will not be deceived as to the masking value. Next, the server can, on its own, create a commitment to the target user's masked balance, which it already knows. Once the server has a commitment to the masked balance and the mask value, it can subtract the mask and get a commitment to the user's actual balance, which the user can then (in zero-knowledge) prove is less than the deposit. 

\begin{figure}
\centering
\includegraphics[width=.60\linewidth]{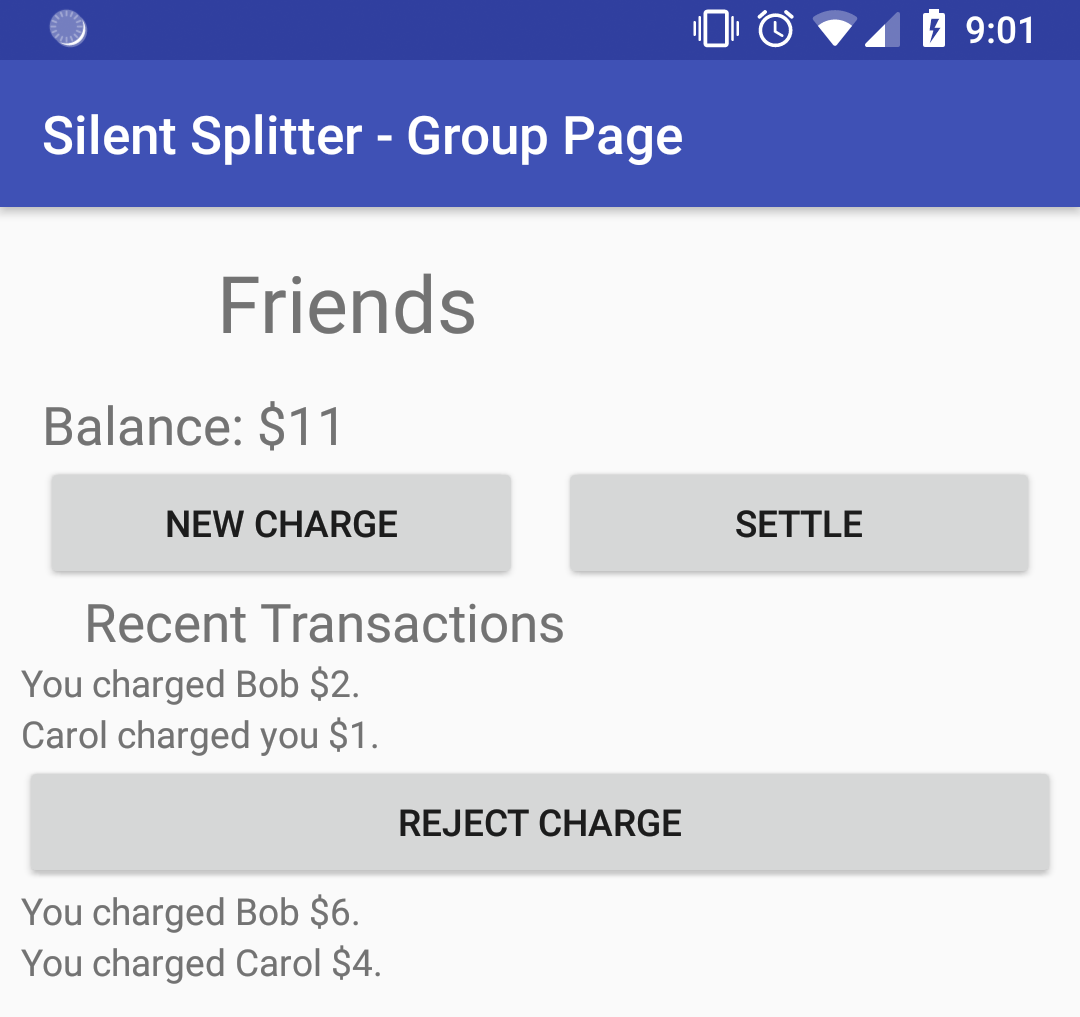}
\caption{A screenshot of our Android app. The user has sent charges to Bob and Carol and also been charged $\$1$ by Carol, which can be rejected via the ``Reject Charge'' button. }
\label{screenshot}
\end{figure}

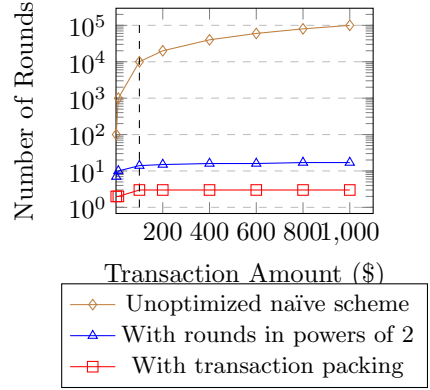
\begin{figure}
\centering
\begin{tikzpicture}
\centering
\begin{semilogyaxis}[
log origin = infty,
title={Rounds Needed to Process Transactions},
ylabel = {Number of Rounds},
width=5cm,
xlabel={Transaction Amount (\$)},
xtick={200, 400, 600, 800, 1000},
ytick={1, 10, 100, 1000, 10000, 100000},
legend style={at={(0.5,-0.34)},
 legend entries={\small Unoptimized na\"ive scheme, \small With rounds in powers of 2, \small With transaction packing}, anchor=north, legend columns=1},
xmin=0,
xmax=1100,
ymajorgrids=true,
grid style=dashed
]
\addplot[color=brown, mark=diamond]
coordinates{
  (1, 100)
  (10, 1000)
  (100, 10000)
  (200, 20000)
  (400, 40000)
  (600, 60000)
  (800, 80000)
  (1000, 100000)
};

\addplot[color=blue, mark=triangle]
coordinates{
  (1, 7)
  (10, 10)
  (100, 14)
  (200, 15)
  (400, 16)
  (600, 16)
  (800, 17)
  (1000, 17)
};

\addplot[color=red, mark=square]
coordinates{
  (1, 2)
  (10, 2)
  (100, 3)
  (200, 3)
  (400, 3)
  (600, 3)
  (800, 3)
  (1000, 3)
};


\coordinate (A) at (axis cs: 100,1);
\coordinate (B) at (axis cs: 100,150000);
\draw [black,sharp plot,dashed] (A) -- (B);
\end{semilogyaxis}
\end{tikzpicture}
\caption{Number of rounds needed for transactions of different amounts as we add our bandwidth-saving optimizations. Over 90\% of users in our survey reported that their typical transactions fell to the left of the vertical dotted line, and even a $\$1,000$ transaction requires only 3 rounds, a $3,300\times$ improvement over the na\"ive scheme. }
\label{roundsgraph}
\end{figure}

\section{Implementation and Evaluation}\label{eval}
We implemented our system as an Android app and an accompanying server application using the Java Spark framework~\cite{javaspark}. Our app allows users to create and join groups, send charges to each other, reject unwanted charges from other users, and settle the group balance if desired. We did not implement the optimizations for supporting large transactions and rollbacks. Figure~\ref{screenshot} shows a screenshot of a group's page in the app. 

We implemented our app as a simple front-end for the functionality provided by our protocol, but there is no technical limitation preventing more complex user interfaces -- such as those available in non-private payment splitting apps in use today -- being placed in front of our protocol. For example, our app allows users to individually charge other users in a group. A more complex app could provide an interface where the user enters a charge that is assigned to the whole group or some subset of the group, e.g. $\$80.56$ split among $4$ people, and the app could split this into three charges of $\$20.14$ each for other diners to repay whoever paid the bill. Observe that such a splitting would not run into issues with colliding charges because the same user is initiating all the charges and can stagger them over several rounds. This way charges from the same real-world transaction always take place one at a time. 

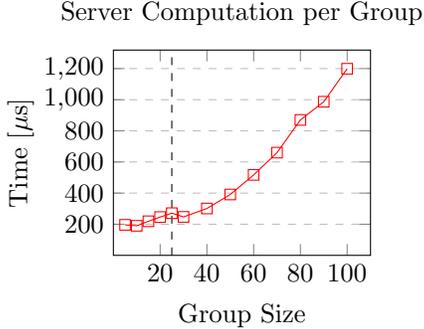
\begin{figure}
\centering
\begin{tikzpicture}
\centering
\begin{axis}[
title={Server Computation per Group},
ylabel = {Time [$\mu$s]},
width=5cm,
xlabel={Group Size},
xtick={20, 40, 60, 80, 100},
ytick={200, 400, 600, 800, 1000, 1200},
xmin=0,
ymin=0,
ymajorgrids=true,
grid style=dashed
]
\addplot[color=red, mark=square]
coordinates{
  (5, 196)
  (10, 190)
  (15, 219)
  (20, 246)
  (25, 271)
  (30, 247)
  (40, 301)
  (50, 391)
  (60, 517)
  (70, 660)
  (80, 870)
  (90, 987)
  (100, 1199)
};

\coordinate (A) at (axis cs: 25,0);
\coordinate (B) at (axis cs: 25,1300);
\draw [black,sharp plot,dashed] (A) -- (B);
\end{axis}

\end{tikzpicture}
\caption{Server round computation time (in microseconds) for one group. The same computation applies for both the semihonest and malicious server settings. 92\% of users in our survey reported that their largest group size fell to the left of the vertical dotted line ($271\mu s$).}
\label{servergraph}
\end{figure}

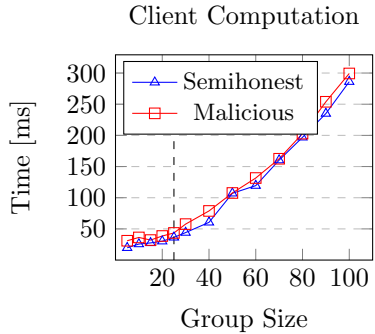
\begin{figure}
\centering
\begin{tikzpicture}
\centering
\begin{axis}[
title={Client Computation},
ylabel = {Time [ms]},
width=5cm,
xlabel={Group Size},
xtick={20, 40, 60, 80, 100},
ytick={50, 100, 150, 200, 250, 300},
legend pos=north west,
xmin=0,
ymin=0,
ymajorgrids=true,
grid style=dashed
]
\addplot[color=blue, mark=triangle]
coordinates{
  (5, 19.7)
  (10, 25.7)
  (15, 27.5)
  (20, 30.2)
  (25, 36.4)
  (30, 44.1)
  (40, 60.5)
  (50, 107.1)
  (60, 119.4)
  (70, 160.3)
  (80, 196.9)
  (90, 235.0)
  (100, 286.1)
};

\addplot[color=red, mark=square]
coordinates{
  (5, 31.0)
  (10, 36.3)
  (15, 32.1)
  (20, 38.6)
  (25, 43.5)
  (30, 57.7)
  (40, 78.8)
  (50, 107.6)
  (60, 131.6)
  (70, 162.5)
  (80, 202.8)
  (90, 253.9)
  (100, 299.3)
};

\legend{\small Semihonest, \small Malicious}

\coordinate (A) at (axis cs: 25,0);
\coordinate (B) at (axis cs: 25,400);
\draw [black,sharp plot,dashed] (A) -- (B);
\end{axis}
\end{tikzpicture}
\caption{Client round computation time (in milliseconds) for the semihonest and malicious server settings. 92\% of users in our survey reported that their largest group size fell to the left of the vertical dotted line ($36.4/43.5$ms).}
\label{clientgraph}
\end{figure}

We evaluated the performance of our implementation on a MacBook Pro with a 2.9 GHz Intel Core i7 processor and 16GB RAM running macOS Sierra Version 10.12.6 for the server and a Google Pixel (1.6GHz quad-core processor, 4GB RAM) running Android 8.1.0 for the app. Our performance tests measure the server and client running time for one round, using AES128 (default Android implementation) as our PRF. The reported times include all computation outside of network communication and are averages over 10 rounds each on the server side and 5 each on the app. We measured performance for groups of up to size 100 because no users reported having a group of size greater than 100 in our survey. Most groups were reported to be of less than 10 members (69\%) or less than 25 members (92\%), a setting where our protocol performs particularly well.

\vspace{.3em}\noindent\textbf{Rounds}. Since we operate and measure our system in terms of rounds, it is important to understand how many rounds would be required to make realistic payments. Figure~\ref{roundsgraph} shows the number of rounds required to process transactions of various sizes and the impact of our optimizations in reducing the number of rounds. Our optimizations reduce the number of rounds required from linear to logarithmic in the transaction value, resulting in savings of $500\times$ and $3,300\times$ for $\$10$ and $\$100$ transactions respectively. A $\$100$ transaction requires 2 rounds to complete, and transactions of over $\$1,000$ still require only $3$ rounds. Recall that we support transactions at the granularity of 1 cent, so a $\$1,000$ transaction corresponds to the transfer of $100,000$ cents. Returning to the example from the beginning of this section, three $\$20.14$ charges sent to participants in a dinner that cost $\$80.56$ would run in two rounds each, resulting in six rounds to complete the transaction.

\vspace{.3em}\noindent\textbf{Server Performance}. Figure~\ref{servergraph} shows the per-round server side running time for a single group, which ranges from about 200 \emph{micro}seconds for a group of 10 members to about 1.2 milliseconds for a group of size 100. Since there are no server side changes to the protocol between the semihonest and malicious server settings, performance in the two regimes is identical. Computation for each group operates entirely independently of other groups, so a server can scale perfectly to a larger number of cores, enabling a server only as powerful as a commodity laptop to handle several thousands of groups per second. The extreme efficiency and scalability of our scheme results from the fact that the server does not execute any cryptographic operations, only addition on 128-bit integers. 
Server memory requirements per group are also small because the server can add users' inputs into running totals for each round as they arrive, removing the need to keep all messages for a given round in memory until it completes. 

\vspace{.3em}\noindent\textbf{Client Performance}. Figure~\ref{clientgraph} shows the running time for the Android app to compute its inputs and process the outputs for each round. Running times range from 26-286ms in the semihonest setting and 36-299ms in the malicious setting. The overhead of malicious server security over semihonest security is quite small in both relative and absolute terms -- less than 5\% for groups of 100 members and never more than 20ms. Recall that the difference between semihonest and malicious security here applies only to the server security protections, and both protocols provide security against malicious users. The lightweight nature of this computation, consisting primarily of PRF evaluations and 128-bit additions, means it can conveniently run as a regular background process without causing an undue burden on a user's phone. 

\vspace{.3em}\noindent\textbf{Bandwidth}. The bandwidth of each round is $16N$ bytes, where $N$ is the group size, from each user, and $52$~bytes---three 16-byte values and one 4-byte status code---sent back from the server to each user regardless of group size (plus a negligible constant for sending the data in JSON format). For group sizes used in practice, this does not prove to be prohibitively large. A user in a group of size 100 would only send about 1.6KB of data, and users in the more commonly reported group sizes of 10 or 25 would send 160B or 400B respectively. 

\begin{figure}
\centering
\begin{tikzpicture}
\centering
\begin{axis}[
title={Computation Costs},
ylabel = {Time [ms]},
width=5cm,
xlabel={Number of Rounds},
xtick={4, 8, 12, 16, 20, 24},
ytick={200, 400, 600, 800, 1000, 1200},
legend pos=north west,
xmin=0,
xmax=25,
ymin=0,
ymax=1400,
ymajorgrids=true,
grid style=dashed
]
\addplot[color=blue, mark=triangle]
coordinates{
  (2, 800)
  (4, 800)
  (6, 800)
  (8, 800)
  (10, 800)
  (12, 800)
  (14, 800)
  (16, 800)
  (18, 800)
  (20, 800)
  (22, 800)
  (24, 800)
};

\addplot[color=red, mark=square]
coordinates{
  (2, 73)
  (4, 146)
  (6, 219)
  (8, 292)
  (10, 365)
  (12, 438)
  (14, 511)
  (16, 584)
  (18, 657)
  (20, 730)
  (22, 803)
  (24, 876)
};

\legend{\small ZKLedger, \small Our System}

\coordinate (A) at (axis cs: 3,0);
\coordinate (B) at (axis cs: 3,1200);
\draw [black,sharp plot,dashed] (A) -- (B);
\coordinate (C) at (axis cs: 6,0);
\coordinate (D) at (axis cs: 6,1200);
\draw [black,sharp plot,dashed] (C) -- (D);
\end{axis}
\end{tikzpicture}
\\\vspace{1em}
\begin{tikzpicture}
\centering
\begin{axis}[
title={Communication Costs},
ylabel = {Communication~[kB]},
width=5cm,
xlabel={Number of Rounds},
xtick={4, 8, 12, 16, 20, 24},
ytick={10, 20, 30, 40, 50, 60, 70},
legend pos=north west,
xmin=0,
xmax=25,
ymin=0,
ymax=75,
ymajorgrids=true,
grid style=dashed
]
\addplot[color=blue, mark=triangle]
coordinates{
  (2, 45)
  (4, 45)
  (6, 45)
  (8, 45)
  (10, 45)
  (12, 45)
  (14, 45)
  (16, 45)
  (18, 45)
  (20, 45)
  (22, 45)
  (24, 45)
};

\addplot[color=red, mark=square]
coordinates{
  (2, 4.4)
  (4, 8.8)
  (6, 13.2)
  (8, 17.6)
  (10, 22)
  (12, 26.4)
  (14, 30.8)
  (16, 35.2)
  (18, 39.6)
  (20, 44)
  (22, 48.4)
  (24, 52.8)
};

\legend{\small ZKLedger, \small Our System}

\coordinate (A) at (axis cs: 3,0);
\coordinate (B) at (axis cs: 3,1200);
\draw [black,sharp plot,dashed] (A) -- (B);
\coordinate (C) at (axis cs: 6,0);
\coordinate (D) at (axis cs: 6,1200);
\draw [black,sharp plot,dashed] (C) -- (D);
\end{axis}
\end{tikzpicture}
\caption{Computation and communication cost comparison to ZKLedger~\cite{zkledger}. The vertical lines represent the number of rounds needed for a single \$1,000 transaction (in 3 rounds) and splitting a \$80.56 dinner among 4 people (in 6 rounds).}
\label{ZKcomparison}
\end{figure}
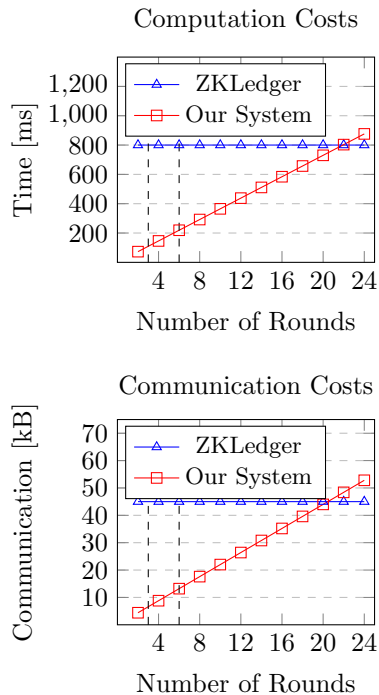

\ignore{
\begin{table}
   \caption{Performance of our system compared to payment splitting with ZKLedger for a group of size 10.}
   \label{tab:perfvszkl}
   \centering

   \begin{tabular}{lrr}
                                    & ZKLedger & Our system \\
      \midrule
      Bandwidth per \$1,000 transaction & 45~kB    & 6.6~kB \\
      Compute time per \$1,000 transaction      & 800~ms   & 110~ms \\
       Bandwidth to split \$80.56 dinner for 4 & 45~kB    & 13.2~kB \\
      Compute to split \$80.56 dinner for 4      & 800~ms   & 220~ms \\
   \end{tabular}
\end{table}
}

\vspace{.3em}\noindent\textbf{Comparison to ZKLedger}. ZKLedger~\cite{zkledger} allows parties with access to a distributed ledger to privately record transactions and ensure the public integrity and auditability of the ledger, primarily targeting large financial institutions. Somewhat similarly, our scheme aims to allow groups of private individuals to record debts without compromising transaction privacy or integrity.  It should be noted that there are important differences between the two settings. Our scheme runs in a continuous series of rounds to hide who initiates transactions whereas entries are only added to ZKLedger when a transaction takes place. ZKLedger offers additional auditing functionalities that are not relevant to our use case. On the other hand, ZKLedger does not offer an in-protocol mechanism for users to contest charges, whereas our protocol does. Finally, ZKLedger's transactions must go on an ever-growing ledger to enable external auditing whereas we have no such requirement. Both systems, however, could potentially be used for payment splitting applications. 

To compare fairly with ZKLedger, we modify our scheme so a round only takes place when a user initiates a transaction. This  leaks the identity of the user initiating a charge, similar to ZKLedger. The change does not affect other security properties.

ZKLedger's code is not public, so we can at best compare to performance numbers reported by their paper, which were taken using virtual machines, each with 4 cores of Intel Xeon E5-2640 2.5 GHz processors, 24GB of RAM, and running 64-bit Linux 4.4.0 on Ubuntu 16.04.3. Our system outperforms ZKLedger in the payment splitting application despite running most of its computation on a mobile phone processor. 

We calculate our system's computation time as the sum of client and server running times for each round. Figure~\ref{ZKcomparison} compares our performance to ZKLedger for a ten member group (the largest for which ZKLedger reports complete transaction times).
Our scheme processes a $\$1,000$ transaction with 7.3$\times$ less computation and with $6.8\times$ lower bandwidth than ZKLedger, and it splits a $\$80.56$ dinner bill among 4 diners with 3.7$\times$ less computation and with $3.4\times$ lower bandwidth. The tipping point where ZKLedger outperforms our system is when a single transaction requires over 22 rounds to complete. 
Our faster performance comes from using only AES in each round (ZKLedger uses Pedersen commitments~\cite{Ped91}) and the absence of large zero-knowledge proofs from our design. 

Note that if we were to use our unmodified system (where the initiator of each charge is hidden from the server and rounds occur at fixed time intervals) to make each transaction above, the transaction latency would be determined by the time taken per round, a configurable parameter. As such, we would expect such a transaction to clear more slowly if we also wish to hide who initiates transactions. 

\section{Future Work}\label{discussion}
This section covers modifications to our architecture and security model that could be explored in future work. 

\ignore{
\vspace{.3em}\noindent\textbf{Integrating with existing payment systems}. 
Our solution can be deployed in a setting where each user signs up with a pseudonym for each group and handles payments outside the app or, at the other end of the spectrum, each user can have a public identity affiliated with many groups and even upload payment information for server assistance in the process of settling at the end of the group's lifespan. 
In the latter setting, the users would send the server their balances at the end of the group's lifespan to handle settling with an external service. 
Since user balances must sum to zero, the server knows if a user has lied about his balance. 
The server would necessarily learn how much each person needs to pay or receive, but the details of payers, payees, quantities, and frequencies of past transactions would remain hidden from the server. 
This point on the tradeoff between privacy and convenience gives users increased privacy while preserving the flexibility of choosing any money-transfer mechanism. 
}
\vspace{.3em}\noindent\textbf{Alternative system architectures}. We built our system to conform with the prevailing architecture of payment splitting apps in use today, where clients connect to a central server which provides the payment splitting service. 
Although this architecture is particularly interesting due to its relevance to practice, a number of other possibilities remain unexplored. For example, the techniques used for query compression in  Riposte~\cite{dpfs, riposte} could be directly applied to a multiserver port of our system, reducing per-round bandwidth per user to square root or even logarithmic~\cite{logdpfs} in the group size. Settings where group members communicate among themselves may also lead to more efficient schemes. 

\vspace{.3em}\noindent\textbf{Alternative applications}. In addition to considering other architectures for solving the payment splitting problem, we may also consider other problems that may be solved by our current system architecture. Although we have focused on payment splitting, our solution can apply to many situations where a group needs to privately keep records. For example, our system could be used to allow a third-party IOT device manufacturer to provide private analytics software. Devices send ``payment requests'' when they use energy or when a user interacts with them, with a separate balance allocated for each aggregate being measured by the software. The group of devices could ``settle'' at the end of a month and reveal aggregated analytics information to the manufacturer without revealing the details of when a given device was used. Most generally, our solution to payment splitting can be seen as an approach to metadata-hiding communication that sends particularly structured messages.

\vspace{.3em}\noindent\textbf{Group hiding}. 
A natural extension of our scheme would be to hide group membership from the server. We observe that hiding group membership fundamentally changes the parameters of the problem. 
In our setting, the server can treat each group separately, allowing the number of groups to scale rapidly for realistic group sizes. 
In the group hiding setting, however, since the server cannot determine which users belong to the same group, server behavior must be independent of group constitution. At this point, it may be more effective to dispense with the structure of groups and focus on the more general problem of processing confidential transactions between arbitrary users. Our work solves the real-world problem of privacy in payment splitting groups, but our techniques do not appear to extend directly to handling general confidential transactions. 

\section{Related Work}\label{related}

A number of generic cryptographic tools could be used to achieve a functionality similar to ours. Most directly, fully homomorphic encryption (FHE)~\cite{Gentry09,BGV11,GSW13} could be used for clients with a shared key to outsource any computation to an untrusted server. Although payment splitting is a special case of what can be accomplished with FHE, FHE remains impractical for most use cases today. Moreover, even with FHE, we would require additional safeguards for integrity against a malicious server.

Somewhat more practical are multiparty computation (MPC) techniques. Although the typical setting for MPC~\cite{GMW87} where multiple parties interact with each other to compute a function does not apply to our setting, there are works that focus on MPC between users with the assistance of a single server~\cite{FKN94,KMR11,KMR12,HLP11}. Kamara et al.~\cite{KMR11,KMR12} extend an earlier garbled-circuits~\cite{Yao82b,Yao86} based approach of Feige et al.~\cite{FKN94} to such a setting, but since they work with garbled boolean circuits, their approach would incur a sizeable overhead in keeping track of user balances compared to ours. 

\ignore{One could view privacy-preserving payment splitting as a variant of metadata-hiding messaging.
Recent works have taken strides toward making such systems practical~\cite{riposte,vuvuzela,stadium,pung,atom}, but they often operate in settings with multiple servers or provide weaker security guarantees. Even an ideal metadata-hiding anonymous messaging scheme, however, may fall short of efficiently satisfying our user integrity and debtor privacy requirements, as it is not immediately clear how to efficiently police users who send different messages to different members of their group. The important distinction between messaging and payment splitting is that messaging does not impose any \emph{semantic} requirements on the content of data, whereas payment splitting does. }

To our knowledge, our work is the first to directly consider privacy in payment splitting. However, many works deal with related problems in the space of payments and privacy. The notion of an anonymous digital currency in a centralized setting was first proposed by Chaum~\cite{CHaum82,Chaum83} and has been the subject of almost continuous study since~\cite{CFN88,OO89,Brands93,Camenisch98,CFT98,CHL05,CHL06,BCFK15,WKCC18}. 
Since the advent of Bitcoin~\cite{bitcoin}, decentralized cryptocurrencies have come to the forefront of research on privacy and digital payments. While Bitcoin itself only provides pseuodonymity to its users and has been shown to admit tracing of user identities~\cite{AKR+13,MPJ+13}, a number of proposals for modifications or alternative cryptocurrencies provide stronger privacy guarantees~\cite{mw,conftrans,zerocash}. ZKLedger~\cite{zkledger} has a similar flavor to our work in terms of private auditing but targets a very different setting and at higher cost, see Section~\ref{eval}. 
By focusing on the special case of payment splitting, we build more efficient solutions than are possible in the general case of anonymous payments.

Our solution for malicious security is similar to the Homomorphic MACs of Agrawal and Boneh~\cite{AB09}, which belong to a class of works dealing with computation on \emph{authenticated} data first proposed by Johnson et al~\cite{DJM+02}. Our construction can be viewed as a special case of the MACs of Agrawal and Boneh, but the security notions we require do not exactly align with theirs. Their security game allows an adversary to produce a valid tag on any linear combination of previously produced messages, but we require that an adversary can only send the exact results of the server's designated computation and no other function of user-provided inputs. 

\section{Conclusion}\label{conclusion}
We have presented a payment splitting app that hides all transaction data from the service provider. 
We showed how we achieve privacy and integrity in the face of malicious users or a malicious server while only relying on lightweight cryptography on user devices and computing no cryptographic operations whatsoever on the server side. 
Our core protocol operates in rounds, and we showed in our evaluation that it can scale to large numbers of groups, requiring less than 300 microseconds of computation per round for the vast majority of groups. Likewise, the mobile app requires less than 50 milliseconds of computation per round on a user's phone, providing users with improved privacy in a payment splitting app at very little computational cost. 

\section*{Acknowledgment}

We would like to thank the anonymous reviewers for their helpful feedback in improving the paper. 

The majority of this work was completed during a summer internship at Visa Research. In addition, this work was supported by NSF, DARPA, ONR, and the Simons Foundation.

\bibliographystyle{plain}
\bibliography{main} 

\begin{thebibliography}{10}

\bibitem{javaspark}
Spark java framework, http://sparkjava.com, 2018.

\bibitem{AB09}
Shweta Agrawal and Dan Boneh.
\newblock Homomorphic macs: Mac-based integrity for network coding.
\newblock In {\em ACNS}, 2009.

\bibitem{AKR+13}
Elli Androulaki, Ghassan Karame, Marc Roeschlin, Tobias Scherer, and Srdjan
  Capkun.
\newblock Evaluating user privacy in bitcoin.
\newblock In {\em Financial Cryptography}, 2013.

\bibitem{pung}
Sebastian Angel and Srinath T.~V. Setty.
\newblock Unobservable communication over fully untrusted infrastructure.
\newblock In {\em OSDI}, 2016.

\bibitem{BCFK15}
Foteini Baldimtsi, Melissa Chase, Georg Fuchsbauer, and Markulf Kohlweiss.
\newblock Anonymous transferable e-cash.
\newblock In {\em PKC}, 2015.

\bibitem{zerocash}
Eli Ben{-}Sasson, Alessandro Chiesa, Christina Garman, Matthew Green, Ian
  Miers, Eran Tromer, and Madars Virza.
\newblock Zerocash: Decentralized anonymous payments from bitcoin.
\newblock In {\em {IEEE} Symposium on Security and Privacy}, 2014.

\bibitem{logdpfs}
Elette Boyle, Niv Gilboa, and Yuval Ishai.
\newblock Function secret sharing: Improvements and extensions.
\newblock In {\em CCS}, pages 1292--1303, 2016.

\bibitem{BGV11}
Zvika Brakerski, Craig Gentry, and Vinod Vaikuntanathan.
\newblock Fully homomorphic encryption without bootstrapping.
\newblock {\em {IACR} Cryptology ePrint Archive}, 2011.

\bibitem{Brands93}
Stefan Brands.
\newblock Untraceable off-line cash in wallets with observers (extended
  abstract).
\newblock In {\em CRYPTO}, 1993.

\bibitem{Camenisch98}
Jan Camenisch.
\newblock {\em Group signature schemes and payment systems based on the
  discrete logarithm problem}.
\newblock PhD thesis, {ETH} Zurich, Z{\"{u}}rich, Switzerland, 1998.

\bibitem{CHL05}
Jan Camenisch, Susan Hohenberger, and Anna Lysyanskaya.
\newblock Compact e-cash.
\newblock In {\em EUROCRYPT}, 2005.

\bibitem{CHL06}
Jan Camenisch, Susan Hohenberger, and Anna Lysyanskaya.
\newblock Balancing accountability and privacy using e-cash (extended
  abstract).
\newblock In {\em SCN}, 2006.

\bibitem{cardtronics}
Cardtronics.
\newblock Health of cash study, u.s. edition, 2017.

\bibitem{CFT98}
Agnes~Hui Chan, Yair Frankel, and Yiannis Tsiounis.
\newblock Easy come - easy go divisible cash.
\newblock In {\em EUROCRYPT}, 1998.

\bibitem{CHaum82}
David Chaum.
\newblock Blind signatures for untraceable payments.
\newblock In {\em CRYPTO}, 1982.

\bibitem{Chaum83}
David Chaum.
\newblock Blind signature system.
\newblock In {\em CRYPTO}, 1983.

\bibitem{CFN88}
David Chaum, Amos Fiat, and Moni Naor.
\newblock Untraceable electronic cash.
\newblock In {\em CRYPTO}, 1988.

\bibitem{riposte}
Henry Corrigan{-}Gibbs, Dan Boneh, and David Mazi{\`{e}}res.
\newblock Riposte: An anonymous messaging system handling millions of users.
\newblock In {\em {IEEE} Symposium on Security and Privacy}, 2015.

\bibitem{FKN94}
Uriel Feige, Joe Kilian, and Moni Naor.
\newblock A minimal model for secure computation (extended abstract).
\newblock In {\em STOC}, pages 554--563, 1994.

\bibitem{Gentry09}
Craig Gentry.
\newblock Fully homomorphic encryption using ideal lattices.
\newblock In {\em STOC}, pages 169--178, 2009.

\bibitem{GSW13}
Craig Gentry, Amit Sahai, and Brent Waters.
\newblock Homomorphic encryption from learning with errors:
  Conceptually-simpler, asymptotically-faster, attribute-based.
\newblock {\em {IACR} Cryptology ePrint Archive}, 2013.

\bibitem{dpfs}
Niv Gilboa and Yuval Ishai.
\newblock Distributed point functions and their applications.
\newblock In {\em Advances in Cryptology - {EUROCRYPT} 2014 - 33rd Annual
  International Conference on the Theory and Applications of Cryptographic
  Techniques, Copenhagen, Denmark, May 11-15, 2014. Proceedings}, pages
  640--658, 2014.

\bibitem{GGM84}
Oded Goldreich, Shafi Goldwasser, and Silvio Micali.
\newblock On the cryptographic applications of random functions.
\newblock In {\em {CRYPTO}}, 1984.

\bibitem{GMW87}
Oded Goldreich, Silvio Micali, and Avi Wigderson.
\newblock How to play any mental game or {A} completeness theorem for protocols
  with honest majority.
\newblock In {\em STOC}, pages 218--229, 1987.

\bibitem{HLP11}
Shai Halevi, Yehuda Lindell, and Benny Pinkas.
\newblock Secure computation on the web: Computing without simultaneous
  interaction.
\newblock In {\em CRYPTO}, 2011.

\bibitem{mw}
Tom~Elvis Jedusor.
\newblock Mimblewimble, 2016.

\bibitem{DJM+02}
Robert Johnson, David Molnar, Dawn~Xiaodong Song, and David~A. Wagner.
\newblock Homomorphic signature schemes.
\newblock In {\em CT-RSA}, 2002.

\bibitem{KMR11}
Seny Kamara, Payman Mohassel, and Mariana Raykova.
\newblock Outsourcing multi-party computation.
\newblock {\em {IACR} Cryptology ePrint Archive}, 2011.

\bibitem{KMR12}
Seny Kamara, Payman Mohassel, and Ben Riva.
\newblock Salus: a system for server-aided secure function evaluation.
\newblock In {\em CCS}, 2012.

\bibitem{conftrans}
Gregory Maxwell.
\newblock Confidential transactions, 2015.

\bibitem{MPJ+13}
Sarah Meiklejohn, Marjori Pomarole, Grant Jordan, Kirill Levchenko, Damon
  McCoy, Geoffrey~M. Voelker, and Stefan Savage.
\newblock A fistful of bitcoins: characterizing payments among men with no
  names.
\newblock In {\em IMC}, 2013.

\bibitem{coniks}
Marcela~S. Melara, Aaron Blankstein, Joseph Bonneau, Edward~W. Felten, and
  Michael~J. Freedman.
\newblock {CONIKS:} bringing key transparency to end users.
\newblock In {\em USENIX Security}, 2015.

\bibitem{bitcoin}
Satoshi Nakamoto.
\newblock Bitcoin: A peer-to-peer electronic cash system, 2008.

\bibitem{zkledger}
Neha Narula, Willy Vasquez, and Madars Virza.
\newblock zkledger: Privacy-preserving auditing for distributed ledgers.
\newblock In {\em NSDI}, 2018.

\bibitem{OO89}
Tatsuaki Okamoto and Kazuo Ohta.
\newblock Disposable zero-knowledge authentications and their applications to
  untraceable electronic cash.
\newblock In {\em CRYPTO}, 1989.

\bibitem{Ped91}
Torben~P. Pedersen.
\newblock Non-interactive and information-theoretic secure verifiable secret
  sharing.
\newblock In {\em CRYPTO}, 1991.

\bibitem{privateride}
Anh Pham, Italo Dacosta, Bastien Jacot{-}Guillarmod, K{\'{e}}vin Huguenin, Taha
  Hajar, Florian Tram{\`{e}}r, Virgil~D. Gligor, and Jean{-}Pierre Hubaux.
\newblock Privateride: {A} privacy-enhanced ride-hailing service.
\newblock {\em PoPETs}, 2017.

\bibitem{splitwiseprivacy}
Splitwise.
\newblock Splitwise privacy policy, 2018.

\bibitem{stadium}
Nirvan Tyagi, Yossi Gilad, Derek Leung, Matei Zaharia, and Nickolai Zeldovich.
\newblock Stadium: {A} distributed metadata-private messaging system.
\newblock In {\em SOSP}, 2017.

\bibitem{WKCC18}
Karl W{\"{u}}st, Kari Kostiainen, Vedran Capkun, and Srdjan Capkun.
\newblock Prcash: Centrally-issued digital currency with privacy and
  regulation.
\newblock {\em {IACR} Cryptology ePrint Archive}, 2018.

\bibitem{Yao82b}
Andrew~Chi{-}Chih Yao.
\newblock Protocols for secure computations (extended abstract).
\newblock In {\em 23rd Annual Symposium on Foundations of Computer Science,
  Chicago, Illinois, USA, 3-5 November 1982}, pages 160--164, 1982.

\bibitem{Yao86}
Andrew~Chi{-}Chih Yao.
\newblock How to generate and exchange secrets (extended abstract).
\newblock In {\em FOCS}, pages 162--167, 1986.

\bibitem{anonrep}
Ennan Zhai, David~Isaac Wolinsky, Ruichuan Chen, Ewa Syta, Chao Teng, and Bryan
  Ford.
\newblock Anonrep: Towards tracking-resistant anonymous reputation.
\newblock In {\em NSDI}, 2016.

\end{thebibliography}

\appendix 

\section{Survey Text}\label{survey}
\ignore{
We conducted a short user survey in order to understand how payment splitting apps are used in practice and to guide the design and evaluation of our system. The survey was sent via email to approximately 250 individuals of which 51 responded. All institutional requirements to administer this survey were satisfied before it was distributed to respondents. The full text of the survey appears at the end of this section. 

The majority of the survey questions deal with how users interact with payment splitting apps. We found that group sizes among our survey respondents most frequently fall in the 1-10 or 10-25 member categories, 69\% and 23\% respectively, with no groups larger than 100 members. This information was used to set the group sizes on which we evaluated our implementation in Section~\ref{eval}. Our respondents almost always (94\%) have 3 or fewer transactions in their groups in a given day, and 58\% of their transactions fell into the \$10-\$30 range. We take this as evidence that our round-based approach is practical because there will not be enough transactions sent over the network to cause congestion with a reasonable round frequency. 

In terms of privacy preferences, we found that only 4\% of respondents prefer to have their transactions be public. The data considered most sensitive about transactions were the identity of the parties involved in the transaction and the location of the transaction, and the data considered least sensitive were the times of transactions and knowledge of who has rejected a charge/what charges were rejected.
}


This is a survey about your use of payment splitting apps. This refers to apps which track credit/debit between peers, such as Splitwise, and does NOT include apps that are designed mainly for one-off payments, like Venmo. Since both kinds of apps can make charges to your credit card, a good way to tell them apart is that your balance can only go negative in a payment splitting app.

Some examples of payment splitting apps: Splitwise, Receipt Ninja, BillPin, SpotMe, Conmigo, and Settle Up.

The results of this survey will be used as part of an ongoing Visa Research project on payment splitting apps. Feel free to contact [redacted] with any questions or comments.

\begin{enumerate}
\item What payment splitting app(s) do you use? If you don't use any, write ``none'' and answer the questions below with regard to however you do split payments.

\item How big is the largest group you have in such an app?
\begin{enumerate}
\item $<10$
\item $10-25$
\item $25-50$
\item $50-100$
\item $>100$
\end{enumerate}

\item How much money is used (per person) in one of your typical transactions on this app?
\begin{enumerate}
\item $<$\$$10$
\item \$$10-30$
\item \$$30-50$
\item \$$50-100$
\item $>$\$$100$
\end{enumerate}

\item What is the typical number of transactions made in one of your groups in a given day? Someone paying for everyone’s lunch counts as 1 transaction.
\begin{enumerate}
\item $<=1$
\item $2-3$
\item $4-5$
\item $6-9$
\item $10+$
\end{enumerate}

\item Is there a particular time of day where you use the app most?
\begin{enumerate}
\item Morning
\item Midday
\item Evening
\item Weekend
\item No Particular Time
\end{enumerate}

\item When charging your friends for payments, to what degree do you usually round the cost?
\begin{enumerate}
\item Nearest cent (no rounding)
\item Nearest 5 cents
\item Nearest 10 cents 
\item Nearest 50 cents
\item Nearest \$1
\item Other (please specify)
\end{enumerate}

\item How often do you reject charges on this app?
\begin{enumerate}
\item I never have
\item Very rarely (less than Monthly)
\item Monthly
\item Weekly
\item Daily
\end{enumerate}

\item What are some issues you see as barriers to use for existing payment splitting apps? Please select all that apply.
\begin{enumerate}
\item Peers not using them
\item Slowness or crashing
\item Lack of privacy of personal finance/behavior data from app provider
\item Poor usability: difficult to form groups, enter charges, settle balances, etc.
\item None: payment splitting apps are fine as-is
\item Other (please specify) 
\end{enumerate}

\item If your app has a social media component, do you set your transactions to “Public,” “Friends Only,” or “Participants Only”? If it does not, which would you choose if it did?
\begin{enumerate}
\item Public
\item Friends Only
\item Participants Only
\end{enumerate}

\item Suppose your payment splitting app is hacked and all the information in it is stolen. Please rank the following in order of how sensitive the information is to you, with 1 being most sensitive.
\begin{enumerate}
\item Businesses/locations where transactions take place
\item Dollar amounts of transactions
\item Knowing who has rejected a charge and which charges they rejected
\item Knowing who is in the group
\item Knowing who is involved in each transaction and who pays
\item The time when each transaction took place
\end{enumerate}

\end{enumerate}

\section{Full Semihonest Construction}\label{semihonestconstruction}
Here is a formal description of the version of our scheme providing security only against a semihonest server.

Our semihonest secure payment splitting scheme $\mathcal{P}_p$ with security parameter $\lambda$ and group size $N$ uses a PRF $f$ with range $\{0,1\}^\lambda$. After setup, the scheme proceeds in rounds, and messages are sent from each client at a rate of one per round. Let $m$ at any time denote the round number at which the current operation began, and let $r_{m,i,j}$ denote $f(\textsf{sk}, m||i||j)$ except $r_{m,i,j}=0$ for values of $m$ smaller than the round in which user $\mathcal{U}_i$ joined the group or after it left.

\noindent\textbf{Setup}. The server stores a vector \textsf{b} of length $N$ constructed from copies of $0$.  Let $\textbf{a}$ represent a vector of length $N$ constructed from $N$ copies of $1$. The users store a value $b_i=0$ and the key $\textsf{sk}$. 

\noindent\textbf{Request}. User $i$ requests a unit of currency from user $j$ according to the following steps. In the next round $m$, user $\mathcal{U}_i$ creates a vector $\textbf{v}_i$ where $\textbf{v}_{i,j}=1+r_{m,i,j}$ and $\textbf{v}_{i,k}=0+r_{m,i,k}$, $\forall k\neq j$. All other users $\mathcal{U}_k$ create vectors $\textbf{v}_k$ where $\textbf{v}_{k,k}=1+r_{m,k,k}$ and $\textbf{v}_{k,k'}=0+r_{m,k,k'}$, $\forall k'\neq k$. Each user sends its vector $\textbf{v}_i$ or $\textbf{v}_k$ to the server.

Upon receiving the messages from clients for the round, the server takes the sum $\textbf{v}=\Sigma_{i=1}^N\textbf{v}_i$ and also sums the values in \textbf{v} to get $v'$. The server then sets $\textbf{b}=\textbf{b}+\textbf{v}-\textbf{a}$ and computes the value $c=\Sigma_{i=1}^N (\textbf{v}_{i,i}-1)\cdot 2^i$. The server sends the tuple $(v', c, b_l)$ to user $\mathcal{U}_l$, $l\in [N]$. 

Each user receives the values $(v'^*, c^*, b_l^*)$ from the server. Then there are a number of cases:

\begin{enumerate}
\item If $v'^*\neq N+\Sigma_{i=0}^N\Sigma_{j=0}^Nr_{m,i,j}$, the user sends an error message to the server who sets $\textbf{b}=\textbf{b}-\textbf{v}+\textbf{a}$ and outputs $\bot$. 

\item Otherwise, if $c^*=-2^i-\Sigma_{j=1}^N(r_{m,j,j}\cdot2^j)$ for some $i\in[N]$, each user $\mathcal{U}_l$ sets $b_l=b_l^*$. In this case, user $i$ outputs $\bot$ if $v_{i,i}=1$ (if it did not make a request).  

\item If neither of the above cases apply, then more than one request collided in the same round. Let the vector $\textbf{c}$ consist of the values $i_1, ..., i_N$ such that $c^*=\Sigma_{j=1}^N( -2^{i_j}-r_{m,j,j})$. In the next round ($m+1$), each party $\mathcal{U}_i, i\in[N]$ submits a vector $\textbf{v}_i$ such that $\textbf{v}_{i,i}=b_i^*-b_i-r_{m,i,i}+r_{m+1,i,i}$ and $\textbf{v}_{i,k}=0+r_{m+1,i,k} \forall k \neq i$ and the server behaves as above. Then the various requests that collided in this round are repeated, each in a subsequent round, in order from smallest to largest value of $i_j$ as the requester. 
\end{enumerate}

\noindent\textbf{Trace}. User $j$ checks if anyone has charged her in round $m^*$ as follows. First she computes $v^*=b_j^*-b_j-\Sigma_{l=1}^Nr_{m^*,l,j}$ for the values of $b_j$ and $b_j^*$ before/after the round $m^*$ in question to see if she has been charged at all. If her balance has shrunk as a result of the transaction, she looks at the value of $c^*$ in that round and sets $i^*=\log_2(-c^{*} - \Sigma_{l=1}^N(r_{m^*,l,l}\cdot2^l))$. 

\noindent\textbf{Settle}. The server outputs the vector \textbf{b} to the users. The output vector for users is formed by setting $b_i=\textbf{b}_i-\Sigma_{m'=1}^m\Sigma_{j=1}^Nr_{m,i,j}$ for each entry $\textbf{b}_i\in\textbf{b}$.

\section{Deferred Proofs}\label{proofsAppendix}


\begin{theorem}\label{semihonest}
Assuming $f$ is a secure PRF, the semihonest secure payment splitting scheme $\mathcal{P}_p$ has correctness, server privacy, debtor privacy, and user integrity. 
\end{theorem}
\begin{proof}
Correctness follows from the construction, so we do not discuss it further. For server privacy, it's important that the protocol proceeds in fixed rounds and that the server does not know when a collision has happened or is being resolved, so the server just gets a vector of masked values from each user in each round. The proof will rely on the fact that the PRF outputs we use to mask values are indistinguishable from random, meaning that an adversary can never tell what masked values it has received. 

\begin{lemma}
Assuming $f$ is a secure PRF, the semihonest secure payment splitting scheme $\mathcal{P}_p$ has server privacy. 
\end{lemma}

\begin{proof}[Proof (server privacy)]
We proceed by a series of indistinguishable hybrids beginning with the experiment \textsf{PRIV[$\mathcal{A},\lambda,0$]} and ending with \textsf{PRIV[$\mathcal{A},\lambda,1$]}. Let \textsf{PRFADV} be the advantage of adversary $\mathcal{A}$ (playing the role of the server) in distinguishing an output of $f$ from a random string. The list of hybrids is as follows:
\begin{itemize}
\item[--] $\mathsf{H_0}$: The real privacy experiment, \textsf{PRIV[$\mathcal{A},\lambda,0$]}

\item[--] $\mathsf{H_1}$: Same as the previous hybrid, but the outputs of $f(\textsf{sk}, \cdot)$ are replaced by outputs of a random function. This is indistinguishable from the previous hybrid by the PRF security of $f$.

\item[--] $\mathsf{H_2}$: Same as the previous hybrid, but the transaction executed by $\mathcal{C}$ is $t_1$ instead of $t_0$.

\item[--] $\mathsf{H_3}$: Same as the previous hybrid, but the random function outputs used to mask each value sent to the server are replaced by evaluations $f(\textsf{sk}, \cdot)$ as described in the construction of the scheme $\mathcal{P}_p$. This is indistinguishable from the previous hybrid by the PRF security of $f$ and is exactly the security experiment \textsf{PRIV[$\mathcal{A},\lambda,1$]}.
\end{itemize}

We use a standard argument to show that adversary $\mathcal{A}$ distinguishes between experiments $\mathsf{H_0}$ and $\mathsf{H_1}$ with advantage at most \textsf{PRFADV}. The argument for experiments $\mathsf{H_2}$ and $\mathsf{H_3}$ is the same, so we omit it.

We use the adversary $\mathcal{A}$ that distinguishes between the outputs of $\mathsf{H_0}$ and $\mathsf{H_1}$ to construct an adversary $\mathcal{B}$ that wins the PRF security game for $f$ with the same advantage. $\mathcal{B}$ acts as the challenger in the privacy game with $\mathcal{A}$ while simultaneously playing as the adversary in the PRF security game. It reproduces the game for $\mathsf{H_0}$ exactly except that any queries to $f(\textsf{sk}, \cdot)$ are replaced by queries to the PRF security game challenger. Observe that if the PRF challenger is using a PRF on a randomly sampled key, then $\mathsf{B}$ provides a perfect simulation of the game $\mathsf{H_0}$. On the other hand, if the PRF challenger is using a random function, $\mathsf{B}$ provides a perfect simulation of the game $\mathsf{H_1}$. Thus the output of $\mathcal{A}$ wins the PRF security game with the same probability that it distinguishes between $\mathsf{H_0}$ and $\mathsf{H_1}$.

Observe that since values sent to the server in $\mathsf{H_1}$ and $\mathsf{H_2}$ are masked with independently random strings, the distributions of messages sent to the server in these two worlds is identical. As such, the advantage of adversary $\mathcal{A}$ in distinguishing between \textsf{PRIV[$\mathcal{A},\lambda,0$]} and \textsf{PRIV[$\mathcal{A},\lambda,1$]} is at most $2\cdot \textsf{PRFADV}=2\cdot\textsf{negl}(\lambda)=\textsf{negl}(\lambda)$, completing the proof. 
\end{proof}
 
Next, we prove that our scheme satisfies the definition of debtor privacy. Note that because the debtor privacy adversary can corrupt users in the group, it has access to the group key. As such, debtor privacy will not use the security of the PRF $f$, relying instead only on the structure of our protocol.
 
 \begin{lemma}
 The semihonest secure payment splitting scheme $\mathcal{P}_p$ has debtor privacy.
 \end{lemma}
 
\begin{proof}[Proof (debtor privacy)]
We will directly prove debtor by privacy by showing that the view of the adversary in the games \textsf{DEBTPRIV[$\mathcal{A},\lambda,0$]} and \textsf{DEBTPRIV[$\mathcal{A},\lambda,1$]} are distributed identically. The view of each user controlled by the adversary in each round of the debtor privacy game consists of an updated masked balance (which will be different for each user), a check value $v'$ corresponding the sum of all entries in every user's input vector for that round, and the tracing value $c$ used to find out who sent a charge in that round. In a settle operation, each user gets an identical vector $\textbf{b}$. We will show that all of these values are independent of $b$. 

Note that the balances of corrupted users are never affected differently in transactions $t_0$ and $t_1$ (or else $\mathcal{C}$ outputs 0), so the balances of the corrupted users will always be the same at all points regardless of $b$. Thus the balances of corrupted users after a transaction where $t_0$ and $t_1$ differ will be the same as they were before with some new masking values added in. These masking values do not depend on $b$ (they only depend on the round number and the PRF key), so malicious users' balances are distributed identically in \textsf{DEBTPRIV[$\mathcal{A},\lambda,0$]} and \textsf{DEBTPRIV[$\mathcal{A},\lambda,1$]}. 

Next, since honest users always submit vectors whose entries sum to 1 in every round, the contribution of every honest user to $v'$ will be independent of~$b$. 

The tracing value $c$ depends only on the entry each user's vector places in the user's own index. An honest user who is making a charge will put a masked 0 in this position, but all other honest users will put a masked 1 because they will not be making any charges. 
Thus the input of a user being charged and a user not being charged (with the exception of the one making the charge) will always be the same value. 
This means that the tracing value $c$ will be distributed the same regardless of the value of~$b$. 

The security definition requires that all balances be identical across both transcripts when a settlement occurs (or else $\mathcal{C}$ outputs 0), so the vectors \textbf{b} sent by the server during settlement will naturally be the same regardless of~$b$. 

Since we have shown that every element of the adversary's view is distributed identically in \textsf{DEBTPRIV[$\mathcal{A},\lambda,0$]} and \textsf{DEBTPRIV[$\mathcal{A},\lambda,1$]}, we can conclude that no PPT adversary can distinguish between the two experiments with any advantage. 

\end{proof}


\begin{lemma}
The semihonest secure payment splitting scheme $\mathcal{P}_p$ has user integrity. 
\end{lemma}
\begin{proof}[Proof (user integrity)]
To prove user integrity, we need to prove three separate claims. First, we must show that any input that would cause the sum of user balances (if the group were to settle) to sum to a nonzero value must be detected, even if all but one member of the group is controlled by a malicious adversary. Second, we must show that a user's locally held balance always matches the balance reported when settling. Third, we must show that an honest user who did not initiate a transaction in a given round can always \emph{detect} if he has been framed as having done so. 

The first component of user integrity is satisfied by the server polling all users to ensure that the sum of their inputs equals the (masked) group size $N$. Since all users are polled, as long as one honest user remains to catch an error, the server will learn that a user has violated integrity in that round. This suffices to prove the desired property because all balances are set to zero at the time of a group's creation, so the only opportunity for the balances to sum to a nonzero value when a group settles is if there is a round where the sum of all the users' inputs is not equal to the group size $N$ (the sum is $N$ instead of 0 because the server subtracts off 1 from each index before adding the sum of user inputs into the balances). The sum of all users' inputs will equal $N$ in an honest round by the correctness of the protocol. 

The fact that a user's balance always matches the output of settling follows directly from the construction because the server sends each user its current entry in the balance vector in each round. 

Proving that an honest user can detect whether he has been framed follows from the correctness of the construction. By this we mean that the security property does not rely on any particular property of the tracing mechanism except that users will agree on who \emph{appears} to be making a charge according to the rules of the protocol. Our protocol, when executed honestly, always results in the tracing value $c$ containing a (masked) power of two $-2^i$ corresponding to a charge coming from user $i$. As such, whenever $c$ contains $-2^i$ for a user $i$ who did not make a charge in the corresponding round, user $i$ can tell that he is being framed. 
\end{proof}
\end{proof}

\begin{theorem}\label{malicious}
Assuming $f$ is a secure PRF, the fully malicious secure payment splitting scheme $\mathcal{P}_m$ has correctness, server privacy, debtor privacy, user integrity, and server integrity (against a malicious server). 
\end{theorem}

Proofs of correctness, server privacy, debtor privacy, and user integrity for the malicious secure scheme $\mathcal{P}_m$ closely resemble those of the semihonest scheme $\mathcal{P}_p$. The server privacy proof is identical because replacing the various $r_{m,i,j}$ values in the construction with random values suffices to render the games \textsf{PRIV[$\mathcal{A},\lambda,0$]} and \textsf{PRIV[$\mathcal{A},\lambda,1$]} indistinguishable. Proofs of debtor privacy and user integrity only replace the integrity check and balance values seen by the users with the more complex values involving $s$ that appear in the malicious secure scheme. 

Server integrity holds because it is hard for the server to generate a message to a client that will be accepted as legitimate. This is the case because there are only a limited number of values a client will accept in a given round, and the server can do no better than guessing at correct ones. For example, a client knows that its balance will either be incremented or decremented, and that the charger will be one of $N$ possible parties. Since the pseudorandom value $s$ and the per-message randomness mask messages from the server, the server has a negligible probability of finding a message that can be correctly unmasked to an acceptable multiple of $s$.

\begin{lemma}\label{malicious}
Assuming $f$ is a secure PRF, the fully malicious secure payment splitting scheme $\mathcal{P}_m$ has server integrity (against a malicious server).
\end{lemma}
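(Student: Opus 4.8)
The plan is to establish server integrity through the unforgeability of the tag value $s=f(\textsf{sk},0)$, which the malicious server $\mathcal{S}^*$ never sees in the clear. First I would move to a hybrid world in which $s$ and every mask $r_{m,i,j}$ (as well as $f(\textsf{sk},1)$) are replaced by truly uniform, independent values; this step is justified by a reduction to the PRF security of $f$, and it is legitimate here precisely because, unlike in the debtor-privacy game, the adversary controls only the server and hence never holds $\textsf{sk}$. The crucial consequence of this substitution is that every message $\mathcal{S}^*$ receives from an honest client has the form $s\,v_{i,j}+r_{m,i,j}$ with a fresh uniform $r_{m,i,j}$, and the setup value $a=s+f(\textsf{sk},1)$ is likewise uniform; thus the server's entire view is uniform and statistically independent of $s$, so $s$ stays uniformly random conditioned on everything $\mathcal{S}^*$ has seen.

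Next I would characterize exactly which server messages an honest client accepts. Each acceptance predicate in the Request step---the integrity check on $v'^*$, the balance-update check on $b_l^*$ over $x\in\{-1,0,1\}$, and the tracing check on $c^*$ over $i\in[N]$---together with the settle-time check $b_i'\cdot s=\textbf{b}_i-m\cdot f(\textsf{sk},1)-\sum_{m'=1}^m\sum_{j=1}^N r_{m',i,j}$, has the same shape: the received value must equal $s\cdot k$ plus a quantity the client recomputes locally from $\textsf{sk}$ and its stored state, where $k$ ranges over a small admissible set (e.g. $\{-1,0,1\}$ for the balance, $\{-2^i\}_{i\in[N]}$ for tracing). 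By the homomorphic structure of the construction, the honest server's computation yields precisely the value carrying the correct multiplier $k$. For $\mathcal{S}^*$ to make a client commit to a balance, tracing outcome, or settlement vector different from the honest one, it must output a value that unmasks to $s\cdot k'$ for some admissible $k'\neq k$; equivalently, starting from the honest value $V$ it must produce $V+s(k'-k)$. Since $s$ is uniform and independent of the server's view, the server can hit such a target only by guessing $s(k'-k)$, which succeeds with probability negligible in $\lambda$.

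I would then close the argument with a union bound: there are polynomially many rounds and $O(N)$ admissible targets per check, so the probability that $\mathcal{S}^*$ ever induces an honest client to accept a non-honest value is negligible. Conditioned on this event not occurring, in every round each honest client either rejects and outputs $\bot$ (detecting the server) or accepts exactly the honest values, so its local counter $b_i'$ evolves identically to the honest execution; the settle check then forces the server's reported $\textbf{b}_i$ to equal the honest one (or be caught). The collision-resolution rounds are covered by the same reasoning, since the round-$(m+1)$ consistency check verifies the untagged values against the round-$m$ tags, which are themselves unforgeable multiples of $s$. Hence, after any transaction, settling either detects $\mathcal{S}^*$ or reproduces the honest balance vector, except with negligible probability.

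The main obstacle I anticipate is quantifying the guessing probability precisely over the 128-bit integer arithmetic actually used, rather than treating $s$ as an abstract field element. In particular, the tracing shifts $s\cdot 2^i$ lose entropy relative to $s$ under modular arithmetic, so I would need to verify that the residual unpredictability of every relevant $s(k'-k)$ remains large---equivalently, that $\lambda$ and the usable message space are chosen so that none of the admissible shifts becomes guessable. A secondary point requiring care is confirming that reusing the single fixed $s$ across all rounds, with only the $r$ masks refreshed, genuinely keeps the server's view independent of $s$, and that the settle consistency check rules out a server that lies coherently about both the per-round balances and the final vector $\textbf{b}$.
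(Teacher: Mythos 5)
Your proposal is correct and follows essentially the same route as the paper's proof: a hybrid replacing $f(\textsf{sk},\cdot)$ with a random function, then the observation that every client acceptance predicate has the form ``received value $= s\cdot x + y$'' for a small admissible set of $x$ and fresh masks $y$, so that deviating from the honest value forces the server to guess $s(k'-k)$, followed by a union bound over rounds, messages, and settlement. The one point where you go beyond the paper is your caveat about entropy loss of $s\cdot 2^i$ under 128-bit modular arithmetic --- the paper's proof simply charges $1/2^\lambda$ per guess without addressing this, so your concern is a legitimate refinement rather than a gap in your own argument.
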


\begin{proof}[Proof (server integrity)]
Let \textsf{PRFADV} be the advantage of adversary $\mathcal{A}$ in distinguishing an output of $f$ from a random string. First, using a hybrid analogous to the step between hybrids $\mathsf{H}_0$ and $\mathsf{H}_1$ in the server privacy proof above, we can replace all evaluations of $f$ with evaluations of a random function.

In each round, the server sends each client an updated balance, an integrity check value, and a value $c$ which identifies anyone making a charge in that round. Whatever value the server produces must be of the form $v^*=s\cdot x + y$ where $s$ and $y$ are known to the users but not the server, with $y$ changing for every message and $s$ fixed. In the case of the integrity value, $x$ must be $N$, in the case of the balance, $x$ must be the previous balance $+/-1$ (unless the requester value indicates a collision, where it doesn't matter what it is), and for the requester value, $x$ must be a power of 2 or a sum of at most $N$ powers of 2. In the case a second round is used for users to send their masked inputs without multiplying by $s$, each value $x$ sent by the server must correspond to the same value sent in the previous round. This means that the malicious server has at most $z=O(N)$ acceptable values that it can set $x$ to be for any message.

Since each value sent from the server to the users in each round is a distinct combination of a subset of users' inputs, $s$ and $y$ are independently random. 
As a result, the probability that $v^*-y = s \cdot x$ for a value of $x$ expected/acceptable to a client is $\frac{1}{2^\lambda}$ for each acceptable value. Since there are $O(N)$ acceptable values, there will be an $O(\frac{N}{2^\lambda})$ probability that a server successfully forges a given message. Taking a union bound over all server messages in the course of the protocol ($m$ rounds of $O(N)$ messages each), we get that $\mathcal{A}$ can break server integrity with probability at most $O(\frac{mN^2}{2^\lambda})$. 

We have now covered server integrity for requests and tracing (both are included in the round by round structure of our protocol), but we must cover settling separately. Fortunately, settling operates according to more or less the same principles as the round protocol. The server sends each user a value corresponding to every other user's balance $p_i$ masked with an independently random value $y_i$ followed by a second value $v^*$ which must correspond to $s\cdot p_i +y_i'$ for another independently random $y_i'$. As above, the probability of successfully fooling a user into accepting an incorrect value $p_i^*$ is $\frac{1}{2^\lambda}$, and the union bound over all messages sent to all users for each of $n$ settlements is less than $\frac{nN^2}{2^\lambda}$.

Finally, the probability that a message from a malicious server $\mathcal{S}^*$ deviates from what an honest server $\mathcal{S}$ would send without being caught is at most the sum of the distinguishing advantage between $\mathsf{H_0}$ and $\mathsf{H_1}$, and the probabilities of failure for the rounds and for settlement. That is $\textsf{PRFADV}+O(\frac{mN^2}{2^\lambda})+\frac{nN^2}{2^\lambda}<\textsf{negl}(\lambda)$. 
\end{proof}

\end{document}